\newcommand{\N}{\mathbb{N}}
\newcommand{\E}{\mathbb{E}}
\newcommand{\R}{\mathbb{R}}
\newcommand{\ALG}{\textsf{ALG}}
\newcommand{\CR}{\textsf{CR}}
\newcommand{\indic}[1]{\mathbbm{1}_{#1}}
\newcommand{\p}{Q}
\newcommand{\A}{\textsf{A}}
\newcommand{\OPT}{\textsf{OPT}}
\newcommand{\diff}{\textrm{d}}
\newcommand{\err}{\eta}
\newcommand{\con}{c}
\newcommand{\rob}{r}
\newcommand{\z}{\Tilde{y}}
\newcommand{\thresh}{\Phi}
\newcommand{\rthresh}{\Tilde{\Phi}(\lambda,\rho,y)}
\newcommand{\pmax}{p^*}
\newcommand{\AOS}{\A^{\operatorname{LS}}_b}
\newcommand{\AOM}{\A^{\operatorname{OM}}_\lambda}
\newcommand{\RAOM}{\A^{\operatorname{OM}}_{\lambda,\rho}}
\newcommand{\ASR}{\A^{\operatorname{SR}}_{\lambda, \rho}}
\newtheorem{theorem}{Theorem}[section]
\newtheorem{corollary}{Corollary}[theorem]
\newtheorem{lemma}[theorem]{Lemma}
\newtheorem{definition}{Definition}[section]
\newtheorem{proposition}[theorem]{Proposition}
\title{On Tradeoffs in Learning-Augmented Algorithms}
\author{%
Ziyad Benomar \\ CREST, Ecole polytechnique,\\
ENSAE, Fairplay joint team, Palaiseau\\ \\
\and
Vianney Perchet \\ CREST, ENSAE, Criteo AI LAB \\ Fairplay joint team, Paris
}
\date{}
\begin{document}

\maketitle

\begin{abstract}
The field of learning-augmented algorithms has gained significant attention in recent years. These algorithms, using potentially inaccurate predictions, must exhibit three key properties: consistency, robustness, and smoothness. In scenarios where distributional information about predictions is available, a strong expected performance is required. Typically, the design of these algorithms involves a natural tradeoff between consistency and robustness, and previous works aimed to achieve Pareto-optimal tradeoffs for specific problems. However, in some settings, this comes at the expense of smoothness. In this paper, we demonstrate that certain problems involve multiple tradeoffs between consistency, robustness, smoothness, and average performance.
\end{abstract}

\section{Introduction}

Many decision-making problems under uncertainty are commonly studied using competitive analysis. In this context, the performance of online algorithms, operating under uncertainty, is compared to that of the optimal offline algorithm, which has full knowledge of the problem instance. 
While competitive analysis provides a rigorous method for evaluating online algorithms, it is often overly pessimistic. In real-world scenarios, decision-makers can have some prior knowledge, though possibly imperfect, about the complete problem instance. For example, predictions of unknown variables might be obtained via machine learning models, or an expert might provide advice on the best course of action. This more realistic setting was formalized by \citet{lykouris2018competitive} and \citet{purohit2018improving} leading to the development of what is now known as learning-augmented algorithms.
In this paradigm, the algorithm receives predictions about the current problem instance, but without any guarantees on their accuracy, and must satisfy three main properties:
\begin{itemize}
    \item Consistency: perform almost as well as the optimal offline algorithm if the predictions are perfect.
    \item Robustness: maintain a performance level close to the worst-case scenario without predictions when the predictions are arbitrarily bad.
    \item Smoothness: the performance should degrade gracefully as the prediction error increases.
\end{itemize}

\paragraph{Consistency, robustness, and brittleness.}
Consider a minimization problem under uncertainty, and let $\ALG$ be an algorithm augmented with a prediction $y$ of an unknown parameter $x$. The input of the algorithm can contain parameters other than $x$, but for simplicity, we denote by $\ALG(x,y)$ the value of the objective function achieved by $\ALG$, and $\OPT(x)$ the value of the optimal offline algorithm. The consistency $\con$ and robustness $\rob$ of $\ALG$ are defined as
\[
\con = \sup_{x}\frac{\ALG(x,x)}{\OPT(x)}\,
\qquad \text{and} \qquad
\rob = \sup_{x,y} \frac{\ALG(x,y)}{\OPT(x)}\;.
\]
Consistency is the worst-case ratio when the prediction is perfectly accurate, i.e. $y=x$, while robustness is the worst-case ratio with adversarial prediction.
Most research on learning-augmented algorithms focuses on achieving good tradeoffs between consistency and robustness. Some studies also establish algorithms with Pareto-optimal tradeoffs, i.e. no algorithm can have simultaneously better consistency and better robustness. However, the proposed algorithms in such studies sometimes lack smoothness. Specifically, their worst-case performance can degrade abruptly, moving from the consistency bound when the predictions are perfect to the robustness bound even with an arbitrarily small error in the prediction. Following the terminology of \citep{angelopoulos2024overcoming},  we say that such an algorithm is brittle.
\begin{definition}[Brittleness]\label{def:brittle}
an algorithm $\ALG$ with robustness $\rob$ is brittle if
\[
\forall \varepsilon >0: \sup_{x,y:\; |x-y| \leq \varepsilon x} \frac{\ALG(x,y)}{\OPT(x)} = \rob\;.
\]
\end{definition}
In real-world scenarios, predictions are rarely perfect. As a result, the only reliable guarantee for brittle algorithms is the robustness bound, which is at best equivalent to the worst-case bound without predictions. This greatly limits the practical usefulness of these algorithms. In the case of the \textit{one-way trading} problem, \cite{angelopoulos2024overcoming} demonstrated in a very recent work that any algorithm achieving a Pareto-optimal tradeoff between consistency and robustness is brittle. This finding implies that, in some problems, achieving smoothness requires deviating from the Pareto-optimal tradeoff between consistency and robustness.

\paragraph{Average-case performance.}
In the context of learning-augmented algorithms, the consistency, robustness, and smoothness of an algorithm represent worst-case guarantees with respect to the prediction error. On the other hand, there are numerous scenarios where the decision-maker might know some information about the distribution of the prediction, which motivates the design of algorithms with good average performance \citep{dutting2021secretaries, gupta2022augmenting, benomar2023advice, henzinger2023complexity, cohen2024max}. Nevertheless, while achieving a balance between worst-case and average-case performance has been widely studied in various fields of algorithm design and machine learning \citep{szirmay1998worst, witt2005worst, peikert2007lattices, antunes2009worst, chuangpishit2018average, rice2021robustness, robey2022probabilistically}, this aspect has not yet been investigated in the context of learning-augmented algorithms for the prediction error.

\subsection{Contributions}
In this work, we explore various tradeoffs that arise in the design and analysis of learning-augmented algorithms. While existing literature has primarily focused on the tradeoff between consistency and robustness, our investigation centers on the tradeoffs between consistency and smoothness, as well as the relationships between the standard criteria for learning-augmented algorithms—namely consistency, robustness, and smoothness—and their average performance under stochastic assumptions regarding predictions.

We begin by examining the \textit{line search problem}, revisiting the algorithm proposed by \citet{angelopoulos2019online}. This algorithm achieves a Pareto-optimal tradeoff between consistency and robustness among deterministic algorithms, but we demonstrate that it is inherently brittle. We show that this brittleness can be mitigated by introducing randomness into the predictions used by the algorithm. The variance of this randomization is quantified by a parameter $\rho \geq 0$. Our analysis reveals that tuning this parameter leads to opposing effects on the consistency and smoothness of the algorithm, thus yielding a tradeoff between these two criteria.

Next, we apply a similar approach to the \textit{one-max search problem}. We examine the Pareto-optimal algorithm introduced by \citet{sun2021pareto}, and we show its brittleness. Furthermore, we demonstrate how randomization can be used to guarantee smoothness at the cost of consistency. Once again, the resulting tradeoff is governed by a parameter $\rho \geq 0$.

Finally, we address the \textit{ski-rental problem}, proposing an algorithm that generalizes that of \citet{purohit2018improving}. Through a tight analysis of its performance, we prove that the Pareto-optimal tradeoff between consistency and robustness can be achieved with different levels of smoothness. However, we show that striving for optimal smoothness degrades the average-case performance of the algorithm, assuming that the prediction induces the correct decision (renting or buying at time $0$) with a probability $\p \in [\tfrac{1}{2},1]$. In this context, a parameter $\rho \in [0,1]$ can be utilized to tune the levels of smoothness and average-case performance, all while maintaining fixed levels of consistency and robustness.

Additionally, we conduct numerical experiments for the three problems studied in the paper, highlighting the various tradeoffs demonstrated in our analysis.

\subsection{Related work}

\paragraph{Learning-augmented algorithms.}
The design of learning-augmented algorithms \citep{lykouris2018competitive,purohit2018improving} relies on using machine-learned advice to go beyond worst-case limitations. These algorithms operate under the assumption that the decision-maker has access to noisy predictions about certain problem parameters. The goal of learning-augmented algorithms is to improve performance if the predictions are accurate, while also ensuring robustness in the face of incorrect or adversarial predictions. Many fundamental algorithmic problems were studied in this setting, such as ski rental \citep{gollapudi2019online, diakonikolas2021learning, antoniadis2021learning, shin2023improved}, caching \citep{lykouris2018competitive, chlkedowski2021robust,antoniadis2023online, antoniadis2023paging}, scheduling \citep{purohit2018improving, merlis2023preemption, lassota2023minimalistic, benomarnon}, and the design of data structures \citep{kraska2018case,lin2022learning, zeynalirobust, benomar2024learning}.

\paragraph{Overcoming brittleness.}
Pareto-optimal tradeoffs between consistency and robustness were studied in \citep{angelopoulos2019online, bamas2020primal, wei2020optimal, sun2021pareto, angelopoulos2023online}. However, the proposed algorithms do not always have smoothness guarantees. For example, \cite{angelopoulos2024overcoming} proved that any Pareto-optimal algorithm for the one-way trading problem is necessarily brittle (Definition \ref{def:brittle}). In this paper, we will show in the line search and in the one-max search problems how a randomized deviation from the Pareto-optimal algorithm allows for overcoming brittleness. A similar approach was used to guarantee smoothness in non-clairvoyant scheduling with limited predictions \citep{benomarnon}.

\paragraph{Line search.} The line search problem \citep{beck1964linear}, also known as the cow path problem, consists of finding a hidden target on an infinite line starting from an initial position, without any information regarding the direction or distance to the target. The goal is to minimize the total distance traveled before reaching the target. The best deterministic algorithm is based on doubling the search distance in alternating directions, and it ensures a competitive ratio of $9$ \citep{beck1970yet, baezayates1993searching}. The line search problem has been extensively studied in the learning-augmented framework with different types of predictions \citep{angelopoulos2023online, angelopoulos2024search}.

\paragraph{One-max search.} In the one-max search problem \citep{el2001optimal}, the decision-maker observes a sequence of adversarially chosen prices $p_1,\ldots,p_n \in [L,U]$, with $0<L<U$. At each step $i$, the price $p_i$ is revealed to the decision-maker, and the latter can decide to stop the game and have a payoff of $p_i$, or reject it irrevocably and move to the next observation. The best deterministic algorithm for this problem consists simply in selecting the first price larger than $\sqrt{LU}$, which guarantees a payoff of at least $\sqrt{L/U} \max_{i \in [n]} p_i$. This problem, as well as its randomized version—\textit{online conversion}—were studied in the learning-augmented setting with a prediction of the maximal price \citep{sun2021pareto}.

\paragraph{Ski-rental}
In the ski-rental problem, the decision maker faces a daily choice between renting a ski at a unit cost or buying it for a one-time cost of $b$, after which skiing becomes free. The length of the ski season, $x$, is unknown, and the goal is to minimize the total cost of renting and buying. A straightforward algorithm for this problem is renting for the first $b-1$ days and then buying at day $b$, resulting in a competitive ratio of 2, which is the best achievable by any deterministic algorithm \citet{karlin1988competitive}. The best competitive ratio with randomized algorithms is $\frac{e}{e-1}$ \citep{karlin1994competitive}.

\section{Smooth Algorithm for Line Search}\label{sec:line-search}
In the line search problem, a target is hidden in an unknown position $x\in \R$ on the line, with $|x| \geq 1$, and the searcher, initially at the origin of the line $O$, must find the target, minimizing the total traveled distance. The optimal offline algorithm only travels a distance of $|x|$ to find the target. On the other hand, the searcher, ignoring if $x$ is to the left or the right side of $O$, must alternate the search direction multiple times before finding the target. Any deterministic algorithm for this problem can be defined as an iterative strategy, parameterized by an initial search direction $s_0 \in \{-1,1\}$ and a sequence of turn points $(d_i)_{i \in \N} \in [1,\infty]^\N$. At the beginning of any iteration $i \geq 0$, the searcher is located at the origin $O$, then it travels a distance $d_i$ in the direction $(-1)^{i} s_0$ and returns to the origin. The algorithm terminates when the position $x$ is reached.

\paragraph{Pareto-optimal algorithm.}
Given a prediction $y$ of $x$, \cite{angelopoulos2023online} designed an algorithm $\AOS$ that has a consistency of $\frac{b+1}{b-1}$, and robustness of $1+\frac{2b^2}{b-1}$, where $b \geq 2$ is a hyperparameter of the algorithm. Denoting by $\AOS(x,y)$ the distance traveled by $\AOS$ to find the target $x$ given the prediction $y$, the consistency and robustness guarantees can be written as
\begin{align*}
\forall x:\quad &\frac{\AOS(x,x)}{|x|} \leq \frac{b+1}{b-1}\;,\\
\forall x,y:\quad &\frac{\AOS(x,y)}{|x|} \leq 1 + \frac{2b^2}{b-1} \;.
\end{align*}

Moreover, the author proves that these consistency and robustness levels are Pareto-optimal. The proposed algorithm has a simple structure: let $k_y \in \N$ such that $b^{k_y-2} < |y| \leq b^{k_y}$, and $\gamma_y = b^{k_y} / |y| \geq 1$, the algorithm $\AOS$ is defined by the initial search direction $s_0 = (-1)^{k_y}\text{sign}(y)$ and the turn points $d_i = b^i/\gamma_y$ for all $i \geq 0$. The algorithm is defined so that, during the iteration $k_y$, the searcher travels a distance of $|y|$ in the direction given by $\text{sign}(y)$, i.e. it reaches the position $y$ exactly at the turning point of iteration $k_y$.

\subsection{Brittleness of the Pareto-optimal algorithm} 
In the following, we will prove that $\AOS$ is brittle, in the sense of Definition \ref{def:brittle}, then we will demonstrate how a simple randomization idea enables making the algorithm smooth.
To better understand the impact of the prediction error on the performance of $\AOS$, we first prove an expression of $\AOS(x,y)/x$ as a function of $y$ and $x$.
\begin{lemma}\label{lem:wc-po-os} 
Let $x\geq 1$, $y>0$ and $j = \lceil \frac{\ln(x/y)}{2 \ln b} \rceil \in \mathbb{Z}$, so that $1 \leq \frac{y}{x} b^{2j} < b^2$. It holds that
\begin{align*}
\frac{\AOS(x,y)}{x} 
= 1 + \frac{2b^2}{b-1} \cdot b^{2(j-1)}\frac{y}{x} - o(1/x)\;.
\end{align*}
\end{lemma}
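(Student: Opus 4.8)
The plan is to track the trajectory of $\AOS$ explicitly and to identify the single iteration on which the target is caught. Since $y>0$ we have $\sign(y)=1$, so $s_0=(-1)^{k_y}$, and writing $\gamma_y=b^{k_y}/y$ the turn points become $d_i=b^i/\gamma_y=y\,b^{\,i-k_y}$. On iteration $i$ the searcher walks in direction $(-1)^i s_0=(-1)^{i+k_y}$, hence toward the half-line containing $x\ge 1$ exactly when $i\equiv k_y\pmod 2$. Consequently the target is reached on the first iteration $i$ with $i\equiv k_y\pmod 2$ and $d_i\ge x$; on every earlier iteration the searcher either heads the wrong way or turns back before reaching distance $x$.

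First I would pin down this catching iteration. For $i\equiv k_y\pmod 2$ the condition $d_i\ge x$ reads $b^{\,i-k_y}\ge x/y$, i.e. $i-k_y\ge \ln(x/y)/\ln b$. The smallest \emph{even} integer meeting this bound is $2j$ with $j=\lceil \ln(x/y)/(2\ln b)\rceil$, which is precisely the quantity in the statement, so the catching iteration is $i^*=k_y+2j$. The two-sided bound $1\le \tfrac{y}{x}b^{2j}<b^2$ is just a restatement of the minimality of $2j$: $b^{2j}\ge x/y$ holds by definition of the ceiling, while $b^{2(j-1)}<x/y$ because $2(j-1)$ fails the threshold. (One also checks that in the regime $x\ge 1$ under consideration $i^*\ge 0$, so the index is admissible.)

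Next I would sum the traveled distance. Before iteration $i^*$ the searcher performs complete round trips contributing $2d_i$ each, and on iteration $i^*$ it walks only the one-way leg of length $x$ before meeting the target; thus
\begin{align*}
\AOS(x,y)=x+2\sum_{i=0}^{i^*-1}d_i = x+\frac{2}{\gamma_y}\cdot\frac{b^{\,i^*}-1}{b-1}.
\end{align*}
Using $i^*=k_y+2j$, $b^{\,i^*}=b^{k_y}b^{2j}$ and $b^{k_y}/\gamma_y=y$, the geometric sum collapses to $\tfrac{2}{b-1}\bigl(y\,b^{2j}-\gamma_y^{-1}\bigr)$.

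Finally, dividing by $x$ and rewriting $y\,b^{2j}/x=b^2\,b^{2(j-1)}\,y/x$ recovers the leading term $\frac{2b^2}{b-1}b^{2(j-1)}\frac{y}{x}$ and leaves the remainder $\frac{2}{(b-1)\gamma_y x}$. Since $\gamma_y\in[1,b^2)$ is bounded independently of $x$, this remainder is the announced lower-order correction as $x\to\infty$, completing the derivation. The only genuinely delicate point is the parity bookkeeping in identifying $i^*$: one must argue that neither a wrong-parity iteration nor an earlier same-parity iteration catches $x$, which is exactly what the minimality of $2j$ encodes; everything else is a routine geometric-series computation.
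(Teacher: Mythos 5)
Your proposal is correct and follows essentially the same route as the paper: identify the catching iteration as $k_y+2j$ (the paper does this by converting $1\le b^{2j}y/x<b^2$ into $b^{k_y+2j-2}/\gamma_y < x \le b^{k_y+2j}/\gamma_y$ and describing the last two legs, while you make the parity/minimality bookkeeping explicit), then sum the geometric series $x+2\sum_{i=0}^{k_y+2j-1}b^i/\gamma_y$ and simplify using $y=b^{k_y}/\gamma_y$ and $\gamma_y\ge 1$. The resulting expression $x+\tfrac{2}{b-1}\bigl(yb^{2j}-\gamma_y^{-1}\bigr)$ matches the paper's exactly.
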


\begin{figure}[h!]
    \centering
    \includegraphics[width=0.6\linewidth]{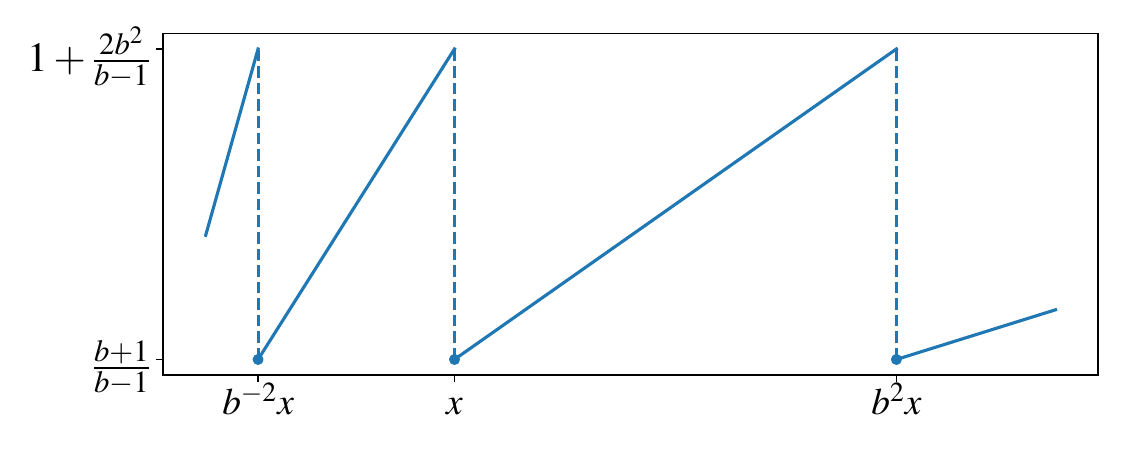}
    \caption{The mapping $y \mapsto \AOS(x,y)/x$ for $x$ arbitrary large and $y \in [\tfrac{x}{b^2}, b^2x]$.}.
    \label{fig:lb-onlineSearch}
\end{figure}

\begin{proof}
By definition of $j$, we have $1 \leq b^{2j} y/x < b^2$, and recalling that $y = b^{k_y}/\gamma_y$, we obtain
\[
\frac{b^{k_y+2j-2}}{\gamma_y} < x \leq \frac{b^{k_y + 2j}}{\gamma_y}\;.
\]
Using these inequalities, we deduce that the algorithm finds the target at iteration $k_y+2j$. Indeed, as $y>0$, at iteration $k_y+2j-2$, the searcher travels along the positive branch until reaching the position $b^{k_y+2j-2}/\gamma_y$ and returns to the origin, then it explores the negative branch at iteration $k_y+2j-1$. The total distance traveled up to that point is $2\sum_{i=0}^{k_y+2j-1} b^i/\gamma_y$. At iteration $k_y + 2j$, since $x \leq b^{k_y + 2j}/\gamma_y$, the algorithm finds the target and terminates. The total traveled distance is therefore
\begin{align}
\AOS(x,y)
&= x + 2\sum_{i=0}^{k_y+2j-1} \frac{b^i}{\gamma_y} \nonumber\\
&= x + \frac{2}{b-1}\left( \frac{b^{k_y+2j}}{\gamma_y} - \frac{1}{\gamma_y} \right) \nonumber\\
&= x + \frac{2b^2}{b-1}\left( b^{2(j-1)} y - \frac{1}{\gamma_y b^2} \right)\;, \nonumber
\end{align}
where we used in the last inequality that $y = b^{k_y}/\gamma_y$. Finally, using that $\gamma_y \geq 1$, we deduce that
\[
1 + \frac{2b^2}{b-1} \left(b^{2(j-1)}\frac{y}{x} - \frac{1}{b^2 x}\right)
\leq \frac{\AOS(x,y)}{x} 
\leq 1 + \frac{2b^2}{b-1} \cdot b^{2(j-1)}\frac{y}{x}\;.
\]
\end{proof}

The expression proved in Lemma \ref{lem:wc-po-os} is illustrated in Figure \ref{fig:lb-onlineSearch} for $x$ arbitrary large and  $y \in [x/b^2, b^2x]$. It shows that the ratio $\AOS(x,y)/x$ increases smoothly from the consistency to the robustness bound if the prediction $y$ is larger than $x$, but presents a discontinuity, going immediately from the consistency to robustness bound if $y<x$ and arbitrarily close to $x$. This proves the brittleness of $\AOS$, which can be formally stated as follows.
\begin{proposition}[$\AOS$ is brittle]\label{prop:AOS-brittle}
For any $\varepsilon >0$, it holds that
\[
\sup_{x,y: |x-y|\leq \varepsilon x} \frac{\AOS(x,y)}{|x|} = 1+\frac{2b^2}{b-1} \;.
\]
\end{proposition}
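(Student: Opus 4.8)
The plan is to prove the equality by two matching inequalities. The upper bound
\[
\sup_{x,y:\,|x-y|\leq\varepsilon x}\frac{\AOS(x,y)}{|x|}\;\leq\;1+\frac{2b^2}{b-1}
\]
is immediate, since this is exactly the robustness guarantee of $\AOS$ recalled above, which holds for \emph{all} pairs $(x,y)$ and hence a fortiori over the restricted set $\{|x-y|\leq\varepsilon x\}$. The whole content of the statement is therefore the matching lower bound, which I would establish by exhibiting admissible instances whose ratio tends to $1+\frac{2b^2}{b-1}$.

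The construction I have in mind approaches the truth from slightly below. Fix any $\delta\in\bigl(0,\min(\varepsilon,\,1-b^{-2})\bigr)$ and set $y=(1-\delta)x$ with $x\geq 1$ large, so that $x>0$ (hence $|x|=x$), $y>0$, and $|x-y|=\delta x\leq\varepsilon x$; the instance is admissible. The crucial point is the value of the integer $j=\lceil\frac{\ln(x/y)}{2\ln b}\rceil$ from Lemma~\ref{lem:wc-po-os}: since $y<x$ we have $\ln(x/y)=\ln\frac{1}{1-\delta}>0$, while the constraint $\delta<1-b^{-2}$ forces $x/y<b^2$, i.e. $\ln(x/y)<2\ln b$. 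Thus $\frac{\ln(x/y)}{2\ln b}\in(0,1)$ and therefore $j=1$. This is precisely the one-sided jump visible in Figure~\ref{fig:lb-onlineSearch}: pushing $y$ an arbitrarily small amount below $x$ already switches $j$ from $0$ (the value for $y\geq x$) to $1$.

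With $j=1$ one has $b^{2(j-1)}=1$, so invoking the lower bound from Lemma~\ref{lem:wc-po-os} gives
\[
\frac{\AOS(x,y)}{x}\;\geq\;1+\frac{2b^2}{b-1}\Bigl(\frac{y}{x}-\frac{1}{b^2 x}\Bigr)\;=\;1+\frac{2b^2}{b-1}\Bigl((1-\delta)-\frac{1}{b^2 x}\Bigr).
\]
Letting $x\to\infty$ kills the vanishing $1/x$ correction and yields $\liminf_{x\to\infty}\AOS(x,y)/x\geq 1+\frac{2b^2}{b-1}(1-\delta)$; then letting $\delta\to 0^+$ drives the right-hand side up to $1+\frac{2b^2}{b-1}$. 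Combined with the upper bound, this forces the supremum to equal $1+\frac{2b^2}{b-1}$ for every $\varepsilon>0$, which is the assertion.

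The main obstacle is not the estimation, which is a direct reading of Lemma~\ref{lem:wc-po-os}, but correctly pinning down $j$ for $y$ just below $x$ and its admissible range. One must verify that the ceiling jumps to $1$ (not $0$) as soon as $y<x$, since it is exactly this jump that injects the extra factor $b^2=b^{2j}/b^{2(j-1)}$ and lifts the ratio to the robustness value; getting the direction of the inequality right and choosing $\delta$ small enough that simultaneously $j=1$ and $|x-y|\leq\varepsilon x$ hold is the only delicate point.
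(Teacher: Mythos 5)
Your proof is correct and follows essentially the same route as the paper's: upper bound from the robustness guarantee, lower bound by taking $y$ slightly below $x$ so that $j=1$ in Lemma~\ref{lem:wc-po-os} and then letting $x\to\infty$ and $y/x\to 1$ (the paper takes these limits in the opposite order, which is immaterial). No gaps.
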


\begin{proof}
Let $\varepsilon > 0$, $x \geq 1$,  and $y \in [\max\{\frac{x}{b^2}, (1-\varepsilon)x\}, x)$. Since $\frac{x}{b^2} \leq y < x$, the variable $j$ from Lemma \ref{lem:wc-po-os} equals $1$, which yields
\[
\frac{\AOS(x,y)}{x} 
\geq 1 + \frac{2b^2}{b-1} \left(\frac{y}{x} - \frac{1}{b^2 x}\right)\;,
\]
and taking $y$ arbitrarily close to $x$ gives
\[
\sup_{y: |x-y|\leq \varepsilon x} \frac{\AOS(x,y)}{x} \geq 1+\frac{2b^2}{b-1}\left(1 - \frac{1}{b^2x}\right)
= \left( 1+\frac{2b^2}{b-1} \right) - \frac{2}{(b-1)x}\;.
\]
Finally, given that $1+\frac{2b^2}{b-1}$ is also an upper bound on $\AOS(x,y)$, taking arbitrarily large concludes the proof.
\end{proof}

\subsection{Smoothness via randomization}
In all the following, we assume without loss of generality that $x > 0$. The worst-case ratio $\AOS(x,y)/x$, given in Proposition \ref{prop:AOS-brittle}, occurs when $x = y+\varepsilon$ with $\varepsilon$ arbitrarily small. To avoid it, we perturb $y$ and run instead the algorithm $\AOS$ with a randomized prediction of the form $\z = (1+\rho \xi)y$, where $\rho > 0$ is a hyperparameter and $\xi$ a positive random variable. 

\begin{theorem}\label{thm:smooth-LS}
Let $b\geq 2$, $\rho \in [0,1]$, and $\xi$ a random variable with tail distribution $\Pr(\xi \geq t) = \frac{1}{(1+t)^2}$ for all $t \geq 0$. Then for any $x \geq 1$ and $y \in \R$, denoting by $\err = |x-y|$, we have for $\z = (1+\rho\xi) y$ that
\[
\frac{\E_\xi[\AOS(x,\z)]}{x}
\leq \frac{b+1+2\rho}{b-1} + \left\{
 \begin{array}{ll}
    \dfrac{2(1+\rho)}{b-1} \cdot \dfrac{\err}{x} & \text{if } y \geq x \\[15pt]
    \dfrac{4(b+1)}{\rho} \cdot \dfrac{\err}{x} & \text{if } y<x
\end{array}
\right.
\]
and we have with probability $1$ that
\[
\frac{\AOS(x, \z)}{x} \leq 1 + \frac{2b^2}{b-1}\;.
\]
\end{theorem}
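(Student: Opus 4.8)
The plan is to reduce both claims to controlling the single random quantity $g(\z) := b^{2(j-1)}\z$ that appears in Lemma~\ref{lem:wc-po-os}, where $j = \lceil \tfrac{\ln(x/\z)}{2\ln b}\rceil$ is the index associated with the realized prediction $\z = (1+\rho\xi)y$. The upper bound of Lemma~\ref{lem:wc-po-os} reads $\AOS(x,\z) \le x + \tfrac{2b^2}{b-1}\,g(\z)$, and the defining inequality $1 \le \tfrac{\z}{x}b^{2j} < b^2$ forces $g(\z) \in [x/b^2,\,x)$. The almost-sure robustness claim is therefore immediate, since every realization satisfies $\tfrac{\AOS(x,\z)}{x} \le 1 + \tfrac{2b^2}{b-1}\tfrac{g(\z)}{x} < 1 + \tfrac{2b^2}{b-1}$ (for $y\le 0$ I instead invoke the worst-case robustness guarantee of $\AOS$ directly). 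It then remains to bound $\E_\xi[g(\z)]$, for which I will use two facts: $\E[\xi] = \int_0^\infty (1+t)^{-2}\,\diff t = 1$, and, for $y>0$, the tail $\Pr(\z \ge v) = \rho^2 y^2/(v-(1-\rho)y)^2$ on $v \ge y$.

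The regime $y \ge x$ is direct. Here $\z \ge y \ge x$ almost surely, so $j \le 0$ and hence $g(\z) \le \z/b^2$; taking expectations with $\E[\z] = (1+\rho)y$ gives $\E[g(\z)] \le (1+\rho)y/b^2$. Feeding this into the lemma and writing $y = x + \err$ reproduces the stated bound with slope $\tfrac{2(1+\rho)}{b-1}$.

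The main obstacle is the regime $x/b^2 \le y < x$, exactly where the deterministic algorithm is brittle: on the whole interval $\z \in [y,x) \subseteq [x/b^2,x)$ one has $j=1$ and $g(\z)=\z$, so these realizations contribute a term close to $x$. I will split $\E[g(\z)] = \E[g(\z)\indic{\z<x}] + \E[g(\z)\indic{\z\ge x}]$. On $\{\z<x\}$ I compute the contribution exactly by integrating $\int_y^x v\,f_\z(v)\,\diff v$ and bound it by $\tfrac{2}{\rho}\,\err$; after clearing denominators this reduces to $\rho y(\rho-3) \le 2\err$, which holds since $\rho\le 1$. The decisive saving is that on this interval $g(\z)=\z$, so $\E[g(\z)\indic{\z<x}] = \E[\z\indic{\z<x}]$; combining with $g(\z)\le \z/b^2$ on $\{\z\ge x\}$ and $\E[\z]=(1+\rho)y$ yields
\[
\E[g(\z)] \le \tfrac{b^2-1}{b^2}\,\E[\z\indic{\z<x}] + \tfrac{(1+\rho)y}{b^2} \le \tfrac{2(b^2-1)}{\rho b^2}\,\err + \tfrac{(1+\rho)x}{b^2}.
\]
The factor $\tfrac{b^2-1}{b^2}$ produced by this cancellation is precisely what turns the crude slope $\tfrac{2}{\rho}$ into the correct one; substituting into the lemma and simplifying $\tfrac{2b^2}{b-1}\cdot\tfrac{2(b^2-1)}{\rho b^2} = \tfrac{4(b+1)}{\rho}$ gives the claim. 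I expect the exact integral and this cancellation to be the delicate part of the argument.

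Finally, the complementary regime $y < x/b^2$ (which also absorbs $y \le 0$) is the large-error case, since there $\err/x > (b^2-1)/b^2$. Here I analyze $g$ no further and simply invoke the almost-sure robustness bound $\tfrac{\AOS(x,\z)}{x} \le 1 + \tfrac{2b^2}{b-1}$, then check that the theorem's right-hand side already dominates it: writing $1+\tfrac{2b^2}{b-1} = \tfrac{(2b-1)(b+1)}{b-1}$, the required inequality reduces to $2(b+1) - \tfrac{2\rho}{b-1} \le \tfrac{4(b+1)}{\rho}\cdot\tfrac{b^2-1}{b^2}$, which follows from $\rho \le 1 \le 2 - 2/b^2$ for $b \ge 2$. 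Assembling the three regimes completes the proof.
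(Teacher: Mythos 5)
Your proposal is correct and follows essentially the same route as the paper's proof: the same three-way case split ($y\ge x$ handled via $\E[\xi]=1$; $y<x/b^2$ absorbed by checking the stated bound dominates the robustness bound; $x/b^2\le y<x$ via the truncated expectation), and your decomposition $\E[g(\z)]\le \tfrac{b^2-1}{b^2}\E[\z\indic{\z<x}]+\tfrac{1}{b^2}\E[\z]$ with $\E[\z\indic{\z<x}]\le \tfrac{2}{\rho}\err$ is algebraically identical to the paper's bound $\Pr(\xi<s)+\rho\E[\xi\indic{\xi<s}]\le 2s$ at $s=\tfrac{1}{\rho}(\tfrac{x}{y}-1)$. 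All the auxiliary verifications (the reduction to $\rho y(\rho-3)\le 2\err$ and the domination check for $b\ge 2$) are sound.
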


The first bound in the previous theorem establishes the algorithm's consistency and smoothness, while the second bound characterizes its robustness, which remains unaffected by randomizing the prediction. Beyond the consistency-robustness tradeoff governed by the parameter $b$, the algorithm also exhibits a tradeoff between consistency and smoothness, governed by the parameter $\rho \in [0,1]$. For $\rho = 0$, the algorithm is identical to that of \cite{angelopoulos2019online} and has optimal levels of consistency and smoothness. For $\rho > 0$, the smoothness factor for $y<x$ improves, but the algorithm becomes less consistent. Although the algorithm is well-defined for all values of $\rho > 0$, we limit our analysis to $\rho \in [0,1]$, as this range allows for simpler expressions of the upper bound. Let us now prove the theorem

\begin{proof}
The online algorithm $\AOS$ has a robustness of $1+\frac{2b^2}{b-1}$. This guarantee remains true with any arbitrary prediction, in particular with $\z$, which gives almost surely that
\[
\frac{\AOS(x, \z)}{x} \leq 1 + \frac{2b^2}{b-1}\;.
\]
Regarding the consistency and smoothness of the algorithm, we give separate proofs depending on the position of $y$ relative to $x$.

If $y \geq x$, then $y = x + \err$, the perturbed prediction $\z$ is larger than $x$ almost surely, and $j = \lceil \tfrac{\ln(x/\z)}{2\ln(b)} \rceil \leq 0$. Thus, Lemma \ref{lem:wc-po-os} with $x$ and $\z$ yields
\begin{align*}
\frac{\AOS(x,\z)}{x}
&\leq 1 + \frac{2}{b-1} \cdot \frac{\z}{x}\\
&= 1 + \frac{2 (1+\rho \xi)}{b-1} \cdot \left(1 + \frac{\err}{x}\right)\\
&= \frac{b-1+2\rho \xi}{b-1} + \frac{2(1+\rho \xi)}{b-1} \cdot \frac{\err}{x}\;
\end{align*}
and using that $\E[\xi] = 1$ gives in expectation
\[
\frac{\E_\xi[\AOS(x,\z)]}{x}
\leq \frac{b-1+2\rho}{b-1} + \frac{2(1+\rho)}{b-1} \cdot \frac{\err}{x}\;.
\]

On the other hand, if $y < x$, let first prove that the upper bound of the theorem exceeds the robustness bound when $y \leq x/b^2$, hence it is true almost surely in that case.
For $y \leq x/b^2$, the absolute prediction error satisfies $\err/x = 1 - y/x \geq 1-1/b^2$, and given that $\rho \leq 1$, we obtain 
\begin{align*}
\frac{b+1+2\rho}{b-1} + \frac{4(b+1)}{\rho} \cdot \frac{\err}{x} 
&\geq \frac{b+1}{b-1} + 4 (b+1)\left(1 - \frac{1}{b^2}\right)\\
&= 1 + \frac{2b^2}{b-1}\left( \frac{1}{b^2} + 2\left(1-\frac{1}{b^2}\right)^2 \right)\\
&= 1 + \frac{2b^2}{b-1}\left( 1 + \left( 1 - \frac{3}{b^2} + \frac{2}{b^4} \right) \right)\\
&\geq 1 + \frac{2b^2}{b-1}\\
&\geq \frac{\AOS(x,\z)}{x} \qquad (\text{w.p. } 1)\;.
\end{align*}
Consequently, in the rest of the proof, we focus on showing the claimed result when $y \in (\frac{x}{b^2}, x)$. In that case, the random variable $\z$ takes values in $(x/b^2, \infty)$. By Lemma \ref{lem:wc-po-os}, using that $\z = (1+\rho \xi) y$, we obtain that
\begin{align*}
\frac{\AOS(x,\z)}{x}
&\leq 1 + \frac{2 \z/x}{b-1} \times \left\{
 \begin{array}{ll}
    b^2 & \text{if } \z \in (x/b^2,x) \\[10pt]
    1 & \text{if } \z \geq x
\end{array}
\right.\\
&= 1 + \frac{2 y/x}{b-1}\left( (1+\rho \xi) \indic{\xi \geq \frac{1}{\rho}(\frac{x}{y}-1)} + b^2 (1+\rho \xi)\indic{\xi < \frac{1}{\rho}(\frac{x}{y}-1)} \right)\\
&= 1 + \frac{2 y/x}{b-1}\left( 1+\rho \xi + (b^2-1) (1+\rho \xi)\indic{\xi < \frac{1}{\rho}(\frac{x}{y}-1)} \right)\;,\\
\end{align*}
which gives in expectation
\begin{equation}\label{eq:os-generic-xi-ub}
\frac{\E[\AOS(x,\z)]}{x}
\leq 1 + \frac{2 y/x}{b-1}\left( 1+\rho + (b^2-1) \big( \Pr( \xi < \tfrac{1}{\rho}(\tfrac{x}{y}-1)) + \rho \E\big[\xi \indic{\xi < \frac{1}{\rho}(\frac{x}{y}-1)}\big] \big) \right)\;. 
\end{equation}
For all $s \geq 0$, it holds that $\Pr(\xi < s) = 1-\frac{1}{(1+s)^2} = \frac{s^2 + 2s}{(1+s)^2}$, and
\begin{align*}
\E[\xi \indic{\xi < s}]
&= \int_0^\infty \Pr(\xi \indic{\xi < s} \geq u) \diff u\\
&= \int_0^s \Pr(\xi \in [u,s)) \diff u\\
&= \int_0^s \Pr(\xi \geq u) \diff u - \int_0^s \Pr(\xi \geq s) \diff u\\
&= \int_0^s \frac{\diff u}{(1+u)^2} - \frac{s}{(1+s)^2}\\
&= 1 - \frac{1}{1+s} - \frac{s}{(1+s)^2}\\
&= \frac{s^2}{(1+s)^2}\;.
\end{align*}
Therefore, using that $\rho \leq 1$ then $s>0$, we obtain
\begin{align*}
\Pr( \xi < s) + \rho \E\big[\xi \indic{\xi < s}\big]
&= \frac{2s+(1+\rho)s^2}{(1+s)^2}
\leq \frac{2s+2s^2}{(1+s)^2}
= \frac{2s}{1+s} 
\leq 2s\;.
\end{align*}
Substituting into \eqref{eq:os-generic-xi-ub} with $s = \frac{1}{\rho}(\frac{x}{y}-1)$ yields
\begin{align*}
\frac{\E[\AOS(x,\z)]}{x}
&\leq 1 + \frac{2y/x}{b-1} \left( 1+\rho + \frac{2(b^2-1)}{\rho}\Big(\frac{x}{y}-1\Big)  \right)\\
&= \frac{b-1+2(1+\rho)y/x}{b-1} + \frac{4(b+1)}{\rho} \cdot \frac{x-y}{x}\\
&\leq \frac{b+1+2\rho}{b-1} + \frac{4(b+1)}{\rho} \cdot \frac{\err}{x}\;,
\end{align*}
which concludes the proof.
\end{proof}

\section{Smooth Algorithm for One-Max Search}
In the one-max search problem, a decision-maker sequentially observes prices $p_1, \ldots, p_n \in [L, U]$, where $0 < L < U$, and upon observing each price $p_i$  they must decide either to select it, halting the process and receiving a payoff of $p_i$, or to reject it irrevocably and move on to the next price. 

Let $\ALG$ denote an online algorithm for this problem. We use $\ALG(p)$ to denote the price selected by $\ALG$ when given an input instance  $p = (p_1, \ldots, p_n)$, and we use $\pmax = \max_{i \in [n]} p_i$ to denote the highest price in the instance. Since this is a maximization problem The competitive ratio of 
$\ALG$ is defined as:
\begin{equation}\label{eq:CR-1M}
\CR(\ALG) = \inf_{p} \frac{\ALG(p)}{\pmax}
\end{equation}
where the infimum is taken over all possible price sequences of arbitrary length, with prices in the range $[L, U]$.
In some works, the competitive ratio is alternatively defined as the inverse of our definition, i.e., 
$
\CR(\ALG) = \sup_{p} \frac{\pmax}{\ALG(p)}.
$
However, since our objective is to prove smoothness guarantees, it is more convenient to define the competitive ratio as in \eqref{eq:CR-1M}.
Finally, let  $\theta = U/L$. 
Without loss of generality, we can assume that the prices are in the interval 
$[1, \theta]$.

\paragraph{Pareto-Optimal Algorithm.} Given a prediction $y$ of the maximum price, the consistency $\con$ and robustness $\rob$ of any algorithm $\ALG$ are defined as
$\con = \inf_p \frac{\ALG(p, \pmax)}{\pmax}$
and 
$\rob = \inf_{p,y} \frac{\ALG(p,y)}{\pmax}$.
In this setting, \citet{sun2021pareto} proved for all $\lambda \in [0,1]$ an algorithm $\AOM$ with consistency $\con(\lambda)$ and robustness $\rob(\lambda)$, defined as the unique solution of the system
\begin{equation}\label{eq:con-rob-relation-OM}
\frac{1}{\con(\lambda)} = \theta \cdot \rob(\lambda)
\quad \text{and} \quad
\frac{1}{\con(\lambda)} = \frac{\lambda}{\rob(\lambda)} + 1-\lambda\;.
\end{equation}
The proposed algorithm is a threshold policy, where the first price at least equal to the threshold $\thresh(\lambda,y)$ is selected, with
\[
\thresh(\lambda,y) = \left\{
 \begin{array}{lll}
    \tfrac{1}{\con(\lambda)} & \text{if } y \in [1,\frac{1}{\con(\lambda)}) \\[12pt]
     \tfrac{\lambda}{\rob(\lambda)} + (1-\lambda) \con(\lambda) y & \text{if } y \in [\frac{1}{\con(\lambda)}, \frac{1}{\rob(\lambda)})\\[12pt]
     \tfrac{1}{\rob(\lambda)} & \text{if } y \in [\frac{1}{\rob(\lambda)}, \theta]
\end{array}
\right. .
\]
If all the prices are less than $\thresh(\lambda,y)$, then the algorithm selects the last price $p_{n}$. Furthermore, the authors prove that the levels of consistency and smoothness of $\AOM$ are Pareto-optimal. However, they do not provide any smoothness guarantees.

\subsection{Brittleness of the Pareto-optimal algorithm}
We will prove in the following that this algorithm is brittle, and we will show, similarly to the line search problem, how this brittleness can be overcome via randomization. In the following, we simply write $\con, \rob$ instead of $\con(\lambda), \rob(\lambda)$, and we denote by $\err := |\pmax - y|$ the prediction error.
Our first result is that the competitive ratio of $\A_\lambda$ degrades smoothly as a function of the prediction error when $y \in [1,\frac{1}{\rob})$. Then, we will prove the brittleness of the algorithm in Proposition \ref{prop:OM-brittle} by considering $y \geq \frac{1}{\rob}$. 
The following lemma shows smoothness for $y \in [1, \tfrac{1}{\con})$.

\begin{lemma}\label{lem:OM-case1-smooth}
If $y \in [1, \frac{1}{\con})$, then 
\[
\frac{\AOM(p,y)}{\pmax} 
\geq  \con  - \frac{\con \cdot \err}{\pmax}\;.
\]
\end{lemma}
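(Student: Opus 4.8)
The plan is to analyze the threshold policy directly in the regime $y \in [1, \tfrac{1}{\con})$, where the definition of $\thresh(\lambda, y)$ gives the constant threshold $\thresh(\lambda,y) = \tfrac{1}{\con}$. The key structural fact is that the competitive ratio of a threshold algorithm is determined by how the chosen threshold compares to the realized maximum price $\pmax$. First I would split into the two natural cases: either the threshold is met by some price in the sequence, or it is not.

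If some price meets the threshold, then the algorithm selects a price at least $\tfrac{1}{\con}$, and since the algorithm never selects below its threshold in this case, the payoff satisfies $\AOM(p,y) \geq \tfrac{1}{\con}$. Dividing by $\pmax$ and using $\pmax \leq \theta$ would give a bound, but the sharper route is to note that whenever the threshold is attained, the payoff is at least $\tfrac{1}{\con}$ while $\pmax$ can be as large as $\theta$, so I must instead track the relationship between $y$, $\pmax$, and $\err$. The cleaner case is when \emph{no} price reaches the threshold $\tfrac{1}{\con}$: then every price, including $\pmax$, is strictly below $\tfrac{1}{\con}$, the algorithm selects the last price $p_n$, and I would bound $\AOM(p,y) = p_n$ from below. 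Here the constraint $\pmax < \tfrac{1}{\con}$ combined with the hypothesis $y < \tfrac{1}{\con}$ is what forces the prediction error $\err = |\pmax - y|$ to be small, which is exactly what drives the smoothness estimate.

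The central computation is to convert the case analysis into the single bound $\con - \tfrac{\con \cdot \err}{\pmax}$. I would proceed by establishing that in every case $\AOM(p,y) \geq \con \pmax - \con \err$, equivalently $\tfrac{\AOM(p,y)}{\pmax} \geq \con - \tfrac{\con\,\err}{\pmax}$. When the threshold is met, I use $\AOM(p,y) \geq \tfrac{1}{\con}$ and the fact that $\tfrac{1}{\con} = \theta \rob$ from the system \eqref{eq:con-rob-relation-OM} together with $\pmax \leq \theta$; when it is not met, I use $\AOM(p,y) = p_n$ and relate $p_n$ and $\pmax$ through the error. The essential inequality to pin down is that when the prediction underestimates the maximum (so $y$ is small but $\pmax$ large), the gap $\pmax - \tfrac{1}{\con}$ is controlled by $\err$; this is where the interval restriction $y \in [1, \tfrac{1}{\con})$ does the real work.

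The main obstacle I anticipate is the case where the threshold $\tfrac{1}{\con}$ is met by some price but $\pmax$ substantially exceeds $\tfrac{1}{\con}$: here the payoff is only guaranteed to be $\tfrac{1}{\con}$ while $\pmax$ could be much larger, so the naive ratio $\tfrac{1/\con}{\pmax}$ is far below $\con$. The resolution must come from the observation that a large $\pmax$ forces a correspondingly large error $\err = \pmax - y \geq \pmax - \tfrac{1}{\con}$, so the deficit $\con - \tfrac{1/\con}{\pmax}$ is absorbed by the smoothness term $\tfrac{\con\,\err}{\pmax}$. Verifying that $\con\,\pmax - \tfrac{1}{\con} \leq \con\,\err$ holds under the hypotheses—likely reducing to $\con^2 \pmax - 1 \leq \con^2(\pmax - y)$, i.e. $\con^2 y \leq 1$, which follows from $y < \tfrac{1}{\con} \leq \tfrac{1}{\con^2}$ since $\con \leq 1$—is the crux of the argument and the step I would check most carefully.
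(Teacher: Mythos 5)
Your proposal is correct and follows essentially the same route as the paper: the same case split on whether $\pmax$ reaches the threshold $\tfrac{1}{\con}$, with the crux in the threshold-met case being exactly the inequality you identify ($\con^2 y \leq 1$, valid since $y < \tfrac{1}{\con}$ and $\con \leq 1$), which is the content of the paper's chain $\tfrac{1/\con}{\pmax} \geq \tfrac{y}{\pmax} = 1 - \tfrac{\err}{\pmax} \geq \con - \tfrac{\con\,\err}{\pmax}$. One small correction: in the case where no price meets the threshold, the prediction error plays no role at all — the algorithm selects $p_n \geq 1$ and $\pmax < \tfrac{1}{\con}$ already forces the ratio above $\con$, so the smoothness term is not what saves that case.
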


\begin{proof}
If $y \in [1, \frac{1}{\con})$, then $\thresh(\lambda, y) = 1/\con$. If $\pmax < 1/\con$ then all the observed prices are below the threshold, and the algorithm selects $p_{n}$, which is $1$ in the worst case, hence
\[
\frac{\AOM(p,y)}{\pmax} \geq \frac{1}{\pmax} \geq \con\;.
\]
On the other hand, if $\pmax \geq 1/\con$, then the value selected by the algorithm is at least $1/\con$ and
\[
\frac{\AOM(p,y)}{\pmax} 
\geq \frac{1/\con}{\pmax} \geq \frac{y}{\pmax}
= 1 - \frac{\pmax-y}{\pmax}
= 1 - \frac{\err}{\pmax}\;.
\]
We deduce from both cases that
\[
\frac{\AOM(p,y)}{\pmax} 
\geq \min \Big( \con \;, \; 1 - \frac{\err}{\pmax} \Big)
\geq \con  - \frac{\con \cdot \err}{\pmax}\;,
\]
where the last inequality holds because each term is at most $1$.
\end{proof}

The next lemma proves a similar result for $y \in [\tfrac{1}{\con}, \tfrac{1}{\rob})$.

\begin{lemma}\label{lem:OM-case2-smooth}
If $y \in [\frac{1}{\con}, \frac{1}{\rob})$, then 
\[
\frac{\AOM(p,y)}{\pmax} 
\geq  \con  - (1-\lambda) \max(1, \tfrac{\con}{\lambda})   \frac{\con \cdot \err}{\pmax}\;.
\]
\end{lemma}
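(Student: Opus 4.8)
The plan is to mirror the structure of the proof of Lemma~\ref{lem:OM-case1-smooth}, splitting into cases according to the position of $\pmax$ relative to the threshold $\thresh(\lambda,y)$. Recall that for $y \in [\frac{1}{\con}, \frac{1}{\rob})$ the threshold takes the affine form $\thresh(\lambda,y) = \frac{\lambda}{\rob} + (1-\lambda)\con y$. First I would handle the case $\pmax < \thresh(\lambda,y)$: then all observed prices fall below the threshold, the algorithm selects the last price $p_n \geq 1$, and since $\pmax < \thresh(\lambda,y) \leq \frac{1}{\rob} \cdot \frac{1}{\rob}$-type bound (more precisely $\pmax$ is bounded above by the threshold value), I expect to recover a ratio of at least $\con$, so the claimed bound holds trivially because the subtracted term is nonnegative.

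The main work is the case $\pmax \geq \thresh(\lambda,y)$, where the algorithm selects a price of value at least $\thresh(\lambda,y)$, giving
\[
\frac{\AOM(p,y)}{\pmax} \geq \frac{\thresh(\lambda,y)}{\pmax} = \frac{\tfrac{\lambda}{\rob} + (1-\lambda)\con y}{\pmax}\;.
\]
The key algebraic step is to rewrite $y = \pmax - (\pmax - y)$ and use $\err = |\pmax - y|$. Substituting $y \geq \pmax - \err$ (and treating the sign of $\pmax - y$ carefully, since $\err$ is the absolute error) should yield
\[
\frac{\AOM(p,y)}{\pmax} \geq (1-\lambda)\con + \frac{\lambda/\rob}{\pmax} - (1-\lambda)\con\,\frac{\err}{\pmax}\;.
\]
The remaining task is to lower-bound $(1-\lambda)\con + \frac{\lambda/\rob}{\pmax}$ by $\con$, using the relation $\frac{1}{\con} = \frac{\lambda}{\rob} + 1 - \lambda$ from \eqref{eq:con-rob-relation-OM}, which rearranges to $\frac{\lambda}{\rob} = \frac{1}{\con} - (1-\lambda)$. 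Since $\pmax \leq \theta$ (and more sharply $\pmax$ can be as large as $\theta$ in the worst case), dividing $\frac{\lambda}{\rob}$ by $\pmax$ shrinks this term, so I will need to bound $\pmax$ from above to control the loss; this is where the factor $\max(1, \frac{\con}{\lambda})$ enters.

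The hard part will be pinning down the worst-case coefficient on $\frac{\err}{\pmax}$ to match exactly $(1-\lambda)\max(1,\tfrac{\con}{\lambda})\con$. I anticipate two sub-regimes depending on whether $\con \leq \lambda$ or $\con > \lambda$: in one regime the naive coefficient $(1-\lambda)\con$ already suffices (so $\max(1,\tfrac{\con}{\lambda}) = 1$ is active), while in the other the slack from the $\frac{\lambda/\rob}{\pmax}$ term must be charged against the error, inflating the coefficient by the factor $\frac{\con}{\lambda}$. The cleanest route is likely to combine the two lower bounds on $\frac{\AOM(p,y)}{\pmax}$ (the value $\con$ from the below-threshold case and the affine bound from the above-threshold case) via a $\min$, exactly as in the proof of Lemma~\ref{lem:OM-case1-smooth}, and then convert the $\min$ into the stated linear-in-$\err$ bound by choosing the coefficient large enough to dominate both branches across the full range $y \in [\frac{1}{\con}, \frac{1}{\rob})$. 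Verifying that $(1-\lambda)\max(1,\tfrac{\con}{\lambda})$ is the tight constant for this domination is the computational crux.
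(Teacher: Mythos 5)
Your case split is the right starting point, but you have assigned the difficulty to the wrong branch, and the branch you dismiss as trivial is exactly where the content of the lemma lies. When $\pmax < \thresh(\lambda,y)$, the algorithm may end up with the last price, worth only $1$, so all that is guaranteed is $\frac{\AOM(p,y)}{\pmax} \ge \frac{1}{\pmax} \ge \frac{1}{\thresh(\lambda,y)}$; since $\thresh(\lambda,y)$ can be arbitrarily close to $\frac{1}{\rob}$ (take $y$ near $\frac{1}{\rob}$ and $\pmax$ just below the threshold), this ratio can be as low as roughly $\rob$, which is strictly smaller than $\con$ whenever $\lambda<1$. So the claim that this case ``holds trivially because the subtracted term is nonnegative'' is false. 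What saves the bound is that in this regime the prediction error cannot be small: one has $y \ge \thresh(\lambda,y) > \pmax$, and writing $\thresh(\lambda,y) = \frac{1}{\con} + (1-\lambda)\con\left(y-\frac{1}{\con}\right)$ via \eqref{eq:con-rob-relation-OM}, the condition $\pmax < \thresh(\lambda,y)$ forces $y - \frac{1}{\con} \le \frac{\err}{1-(1-\lambda)\con}$ and hence $\thresh(\lambda,y) \le \frac{1}{\con} + \left(\frac{1}{\lambda}-1\right)\rob\,\err$. Feeding this upper bound into $\frac{1}{\pmax}\ge\frac{1}{\thresh(\lambda,y)}$ is what produces the coefficient $(1-\lambda)\frac{\con}{\lambda}\cdot\con$ on $\frac{\err}{\pmax}$ --- this is the origin of the $\frac{\con}{\lambda}$ inside the $\max$, not any sub-regime of the above-threshold case.

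Conversely, the above-threshold case is the easy one and needs no inflation of the coefficient: from $\frac{\AOM(p,y)}{\pmax} \ge \frac{\lambda}{\rob\,\pmax} + (1-\lambda)\con\frac{y}{\pmax}$, the bound $\pmax\le\theta=\frac{1}{\con\rob}$ gives $\frac{\lambda}{\rob\,\pmax}\ge\lambda\con$ immediately, and together with $y\ge\pmax-\err$ this yields $\con-(1-\lambda)\frac{\con\,\err}{\pmax}$, i.e.\ coefficient $1$ in place of the $\max$; there is no residual ``loss to control'' and no dependence on whether $\con\lessgtr\lambda$ within this branch. The stated factor $\max(1,\frac{\con}{\lambda})$ is simply the worse of the two coefficients arising from the two branches, combined by a $\min$ of the two lower bounds as you anticipated. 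As written, though, your argument would not go through: the below-threshold branch is unproved (its ``trivial'' justification is incorrect), and your planned effort in the above-threshold branch hunts for a factor that does not appear there.
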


\begin{proof}
Assume that $y \in [1/\con, 1/\rob)$, then the acceptance threshold is $\thresh(\lambda, y) = \lambda/\rob + (1-\lambda)\con y$. If $\pmax \geq \thresh(\lambda,y)$ then the price selected by the algorithm is at least $\thresh(\lambda,y)$, which gives that
\begin{align*}
\frac{\AOM(p,y)}{\pmax} 
\geq \frac{\thresh(\lambda,y)}{\pmax}
= \frac{\lambda}{\rob \pmax} + (1-\lambda)\con \frac{y}{\pmax}\;.
\end{align*}
We have from \eqref{eq:con-rob-relation-OM} that $\pmax \leq \theta = 1/(\con \rob)$, hence $\frac{\lambda}{\rob \pmax} \geq \lambda \con$. Additionally, $y \geq \pmax - \err$, which gives
\begin{align}
\frac{\AOM(p,y)}{\pmax} 
&\geq  \lambda \con + (1-\lambda)\con \Big(1 - \frac{\err}{\pmax}\Big) \nonumber\\
&= \con - (1-\lambda) \frac{\con \cdot \err}{\pmax}\;. \label{eq:OM-case2-lb1}
\end{align}
On the other hand, if $\pmax < \thresh(\lambda, y)$, then the value selected by the algorithm can be as low as $1$ in the worst case, thus 
\begin{equation}\label{eq:OM-case2-p<T}
\frac{\AOM(p,y)}{\pmax} 
\geq \frac{1}{\pmax}
= \frac{1}{\thresh(\lambda,y)} \cdot \frac{\thresh(\lambda,y)}{\pmax}\;. 
\end{equation}
Using \eqref{eq:con-rob-relation-OM}, we can write
\begin{align}
\thresh(\lambda, y)
&= \frac{\lambda}{\rob} + (1-\lambda)\con y \nonumber\\
&= \frac{1}{\con} - (1-\lambda) + (1-\lambda)\con y \nonumber\\
&= \frac{1}{\con} + (1-\lambda) \con \Big( y - \frac{1}{\con} \Big)\;, \label{aligneq:thresh-expression}
\end{align}
hence
\[
y - \thresh(\lambda,y) = \big( 1 - (1-\lambda)c \big) \Big( y - \frac{1}{c} \Big)\;,
\]
and it follows from $\pmax < \thresh(\lambda,y)$ that
\[
y - \frac{1}{c}
= \frac{y - \thresh(\lambda,y)}{1 - (1-\lambda)c}
\leq \frac{y - \pmax}{1 - (1-\lambda)c}
= \frac{\err}{1 - (1-\lambda)c}\;.
\]
Substituting into \eqref{aligneq:thresh-expression} then using \eqref{eq:con-rob-relation-OM} yields
\begin{align*}
\thresh(\lambda, y)
&\leq \frac{1}{\con} + \frac{(1-\lambda)\con}{1 - (1-\lambda) \con}\cdot \err\\
&= \frac{1}{\con} + \frac{(1-\lambda)}{\frac{1}{\con} - (1-\lambda)}\cdot \err\\
&= \frac{1}{\con} + \Big(\frac{1}{\lambda}-1\Big) \rob \cdot \err\;.
\end{align*}
Finally, we use this upper bound on $\thresh(\lambda,y)$ in \eqref{eq:OM-case2-p<T} and obtain that
\begin{align*}
\frac{\AOM(p,y)}{\pmax} 
&= \frac{1}{\thresh(\lambda,y)} \cdot \frac{\thresh(\lambda,y)}{\pmax} \\
&\geq \frac{1}{\frac{1}{\con} + (\frac{1}{\lambda}-1) \rob \cdot \err} \cdot \frac{\thresh(\lambda,y)}{\pmax}\\
&= \frac{c}{1 + (\frac{1}{\lambda}-1) \rob \con \cdot \err} \cdot \frac{\thresh(\lambda,y)}{\pmax}\\
&\geq \con \cdot \frac{\thresh(\lambda,y)}{\pmax} \left( 1 - (\tfrac{1}{\lambda}-1) \rob \con \cdot \err \right)\\
&\geq \con -  \thresh(\lambda,y) (\tfrac{1}{\lambda}-1) \rob \con^2 \cdot \frac{\err}{\pmax}\;.
\end{align*}
The penultimate line follows from the inequality $\frac{1}{1+u} \geq 1-u$, which holds for all $u\geq 0$, and the last line from the assumption $\pmax \leq \thresh(\lambda,y)$. Given that $y \leq 1/\rob$, we have
\[
\thresh(\lambda,y) = \frac{\lambda}{\rob} + (1-\lambda)\con y
\leq \frac{1}{\rob}\big( \lambda + (1-\lambda) \con \big)
\leq \frac{1}{\rob}\;,
\]
hence
\begin{equation}\label{eq:OM-case2-lb2}
\frac{\AOM(p,y)}{\pmax} 
\geq \con -  (\tfrac{1}{\lambda}-1) \con^2 \cdot \frac{\err}{\pmax}
=  \con -  (1 - \lambda) \frac{\con}{\lambda} \cdot \frac{\con \cdot \err}{\pmax}\;. 
\end{equation}
From \eqref{eq:OM-case2-lb1} and \eqref{eq:OM-case2-lb2} we deduce that, if $y \in [1/\con, 1/\rob]$ then
\begin{align*}
\frac{\AOM(p,y)}{\pmax} 
&\geq \min\left( 
\con - (1-\lambda) \frac{\con \cdot \err}{\pmax}
\; ,\; 
\con -  (1 - \lambda) \frac{\con}{\lambda} \cdot \frac{\con \cdot \err}{\pmax}\right)\\
&= \con  - (1-\lambda) \max(1, \tfrac{\con}{\lambda})   \frac{\con \cdot \err}{\pmax}\;.
\end{align*}

\end{proof}


Finally, we demonstrate the brittleness of $\AOM$ by considering $y$ greater than, but arbitrarily close to, $\tfrac{1}{\rob}$.

\begin{proposition}[$\AOM$ is brittle]\label{prop:OM-brittle}
For any $\varepsilon > 0$, it holds that
\[
\inf_{p,y: \frac{|\pmax-y|}{\pmax} \leq \varepsilon} \frac{\AOM(p,y)}{\pmax} = \rob(\lambda)\;.
\]
    
\end{proposition}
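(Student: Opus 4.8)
The plan is to prove brittleness by establishing the two matching bounds $\inf \geq \rob(\lambda)$ and $\inf \leq \rob(\lambda)$, where throughout the infimum is taken over instances $(p,y)$ with relative error $\frac{|\pmax - y|}{\pmax} \leq \varepsilon$. The first inequality is immediate and carries no real content: since $\citet{sun2021pareto}$ established that $\AOM$ has robustness $\rob(\lambda)$, we have $\AOM(p,y)/\pmax \geq \rob(\lambda)$ for \emph{every} pair $(p,y)$, in particular for those with small relative error. Hence the restricted infimum is at least $\rob(\lambda)$, and all the work lies in the reverse direction: exhibiting instances with arbitrarily small prediction error whose competitive ratio nonetheless drops to $\rob(\lambda)$.

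For the upper bound I would exploit the jump of the threshold map at $y = \tfrac{1}{\rob}$. Fix a small $\delta > 0$, set the prediction to $y = \tfrac{1}{\rob} + \delta$, so that $y \in [\tfrac{1}{\rob}, \theta]$ and the acceptance threshold is $\thresh(\lambda,y) = \tfrac{1}{\rob}$, and consider the two-price instance $p = (p_1, p_2)$ with $p_1 = \tfrac{1}{\rob} - \delta$ and $p_2 = 1$, so that $\pmax = \tfrac{1}{\rob} - \delta$ (both prices lie in $[1,\theta]$ for $\delta$ small, since $\tfrac{1}{\rob} > 1$). Both prices are strictly below $\tfrac{1}{\rob}$, so the algorithm accepts neither and is forced to select the last price $p_2 = 1$. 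Consequently
\[
\frac{\AOM(p,y)}{\pmax} = \frac{1}{\tfrac{1}{\rob} - \delta} \xrightarrow[\delta \to 0^+]{} \rob(\lambda)\;,
\]
while the relative error is $\frac{|\pmax - y|}{\pmax} = \frac{2\delta}{\tfrac{1}{\rob} - \delta} \to 0$, hence below $\varepsilon$ once $\delta$ is small enough. This produces a family of instances with vanishing error whose ratio converges to $\rob(\lambda)$, giving $\inf \leq \rob(\lambda)$; combining with the robustness bound yields the claimed equality.

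The conceptual crux — and the only genuinely nontrivial point — is recognizing how to decouple the prediction error from the algorithm's performance near $y = \tfrac{1}{\rob}$: the true maximum $\pmax$ and the prediction $y$ can both sit at $\tfrac{1}{\rob}$ but on opposite sides of the threshold, so an arbitrarily accurate prediction still leaves the searcher trapped just below the acceptance level and stuck with the worst possible final price $1$. Everything else is a one-line verification, requiring no delicate estimate, in sharp contrast to the smoothness proofs where the regime $y < \tfrac{1}{\rob}$ (handled in Lemmas \ref{lem:OM-case1-smooth} and \ref{lem:OM-case2-smooth}) was the delicate one; here it is precisely the \emph{discontinuity} at $\tfrac{1}{\rob}$, rather than smooth degradation, that drives the argument.
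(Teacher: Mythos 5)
Your proof is correct and follows essentially the same route as the paper's: the paper uses the identical two-price instance $p=(\tfrac{1}{\rob}-\delta,\,1)$ with prediction $y=\tfrac{1}{\rob}$ (rather than $\tfrac{1}{\rob}+\delta$), lands in the same threshold regime $\thresh(\lambda,y)=\tfrac{1}{\rob}$, and takes $\delta\to 0$, with the trivial lower bound supplied by robustness exactly as you argue. The only cosmetic caveat is that your choice $y=\tfrac{1}{\rob}+\delta$ requires $\tfrac{1}{\rob}<\theta$, i.e.\ $\con(\lambda)<1$, which fails in the degenerate case $\lambda=0$; taking $y=\tfrac{1}{\rob}$ exactly, as the paper does, sidesteps this.
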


\begin{proof}
Let $\varepsilon > 0$, and consider the instance $p = (\frac{1}{\rob}-\delta,1)$ of size $2$, with $\delta \in (0,\min(\frac{1}{\rob}-1, \frac{\varepsilon}{\rob})]$, so that $p_1$ remains at least $1$ and $|p_1-1/\rob| \leq \varepsilon p_1$. Given a prediction $y = 1/\rob$ of $\pmax = p_1$, $\AOM$ sets an acceptance threshold of $1/\rob > p_1$, hence the algorithm rejects $p_1$ and accepts $p_2 = 1$ as it is the last price in the sequence $p$. It follows that 
\[
\frac{\AOM(p,y)}{\pmax} 
= \frac{1}{1/\rob - \delta}\;,
\]
hence
\[
\inf_{p,y: \frac{|\pmax-y|}{\pmax} \leq \varepsilon} \frac{\AOM(p,y)}{\pmax} 
\leq \lim_{\delta \to 0} \frac{1}{1/\rob - \delta}
= \rob.
\]
Since, by definition, $\rob$ is also a lower bound of $\frac{\AOM(p,y)}{\pmax}$, we deduce that the inequality above is equality, which concludes the proof.
\end{proof}


\subsection{Smoothness via randomization}
As we proved in Lemmas \ref{lem:OM-case1-smooth} and \ref{lem:OM-case2-smooth}, if $y \in [1,\frac{1}{\rob})$ then performance of $\AOM$ degrades smoothly with the prediction error. The brittleness of $\AOM$ in Proposition \ref{prop:OM-brittle} arises in the case where $y \in [1/\rob,\theta]$: the ratio $\AOM(p,y)/\pmax$ is larger than $\con$ for $\pmax \geq 1/\rob$, but it drops immediately to $\rob$ for $\pmax < 1/\rob$, even arbitrarily close to $1/\rob$. 

To attenuate this extreme behavior, we randomize the threshold used when $y \in [1/\rob, \theta]$. Let $\RAOM$ the algorithm accepting the first price at least equal to the random threshold $\rthresh$ defined by
\[
U \sim \mathcal{U}[0,1] , \quad
\rthresh = \left\{
 \begin{array}{ll}
    \thresh(\lambda,y) & \text{if } y \in [1,\frac{1}{\rob}) \\[12pt]
    \dfrac{e^{-\rho U}}{\rob} & \text{if } y \in [\frac{1}{\rob}, \theta]
\end{array}
\right. \;\;.
\]

If $y \in [1,\frac{1}{\rob})$, then $\RAOM$ is equivalent to $\AOM$, thus $\RAOM$ is $\rob$-robust in that case, and the consistency and smoothness guarantees from Lemmas \ref{lem:OM-case1-smooth} and \ref{lem:OM-case2-smooth} extend to $\RAOM$. Consequently, it suffices to study $\RAOM$ when $y\in [\frac{1}{\rob}, \theta]$, and we obtain the following result.

\begin{theorem}\label{thm:smooth-OM}
Let $\lambda \in [0,1]$, $\rho \geq 0$, and let $\con = \con(\lambda)$ and $\rob = \rob(\lambda)$ as defined in \eqref{eq:con-rob-relation-OM}. For any sequence of prices $p =(p_1,\ldots,p_n) \in [1,\theta]^n$ and prediction $y \in [1,\theta]$ of $\pmax := \max_{i \in [n]} p_i$, it holds that
\[
    \frac{\E_U[\RAOM(p,y)]}{\pmax} \geq \left( \frac{1-e^{-\rho}}{\rho}\right) \rob\;,
\]
and denoting by $\eta = |\pmax-y|$, the ratio $\frac{\E_U[\RAOM(p,y)]}{\pmax}$ is at least
\begin{align*}
\left\{
 \begin{array}{lll}
    \con  - \frac{\con \cdot \err}{\pmax} & \text{if } y \in [1,1/\con) \\[15pt]
    \con  - (1-\lambda) \max(1, \tfrac{\con}{\lambda})   \frac{\con \cdot \err}{\pmax} & \text{if } y \in [1/\con, 1/\rob)\\[15pt]
    \Big(\frac{1-e^{-\rho}}{\rho} \Big)\con - \Big(\frac{\con - \rob}{\rho}\Big) \frac{\err}{\pmax} &\text{if } y \in [1/\rob, \theta] 
\end{array}
\right. \;\;,
\end{align*}

\end{theorem}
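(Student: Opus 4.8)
The plan is to reduce the whole statement to the regime $y\in[1/\rob,\theta]$, which is the only place where $\RAOM$ differs from $\AOM$. For $y\in[1,1/\rob)$ the two algorithms coincide, so the first two branches of the smoothness bound are \emph{exactly} Lemma~\ref{lem:OM-case1-smooth} and Lemma~\ref{lem:OM-case2-smooth}, and the robustness guarantee is the $\rob$-robustness of $\AOM$, which already dominates $\tfrac{1-e^{-\rho}}{\rho}\rob$ because $\tfrac{1-e^{-\rho}}{\rho}\le 1$. So it remains to analyse $y\in[1/\rob,\theta]$, where the threshold $e^{-\rho U}/\rob$ with $U\sim\mathcal U[0,1]$ does not depend on $y$.

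First I would record the worst-case behaviour of a threshold policy with threshold $T$: if $\pmax\ge T$ the algorithm accepts some price $\ge T$, and if $\pmax<T$ it is forced onto the last price, worth at least $1$. Writing $T=T(u)=e^{-\rho u}/\rob$ and integrating over $u\in[0,1]$, I obtain an exact lower bound on $\E_U[\RAOM(p,y)]$ as a function of $\pmax$ alone. Splitting the integral at the crossing point $u^\star=\tfrac1\rho\ln\tfrac1{\rob\pmax}$ gives three regimes: for $\pmax\ge 1/\rob$ the expected payoff is $\tfrac{1-e^{-\rho}}{\rho\rob}$ (a constant), for $e^{-\rho}/\rob\le\pmax<1/\rob$ it equals $u^\star+\tfrac{\rob\pmax-e^{-\rho}}{\rho\rob}$, and for $\pmax<e^{-\rho}/\rob$ (possible only if $e^{-\rho}/\rob>1$) it is at least $1$.

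For the robustness bound I would minimise the ratio $R(\pmax):=\E_U[\RAOM]/\pmax$ over $\pmax\in[1,\theta]$. Using $\pmax\le\theta=1/(\con\rob)$ from \eqref{eq:con-rob-relation-OM}, the first regime gives $R\ge\tfrac{1-e^{-\rho}}{\rho}\con$; in the middle regime $R$ moves between its boundary values $\tfrac{1-e^{-\rho}}{\rho}$ (as $\pmax\to 1/\rob$) and $\rob e^{\rho}$ (at $\pmax=e^{-\rho}/\rob$); and the last regime gives $R\ge 1/\pmax$. A short case analysis (checking the one possible interior stationary point of the middle regime) confirms $R\ge\tfrac{1-e^{-\rho}}{\rho}\rob$ throughout, using $\con\ge\rob$, $e^{\rho}\ge1$ and $\tfrac{1-e^{-\rho}}{\rho}\le1$. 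This is the looser, easier half.

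For the smoothness branch, since $R(\pmax)$ is independent of $y$ while the target $\tfrac{1-e^{-\rho}}{\rho}\con-\tfrac{\con-\rob}{\rho}\tfrac{\err}{\pmax}$ only decreases as $\err=|\pmax-y|$ grows, it suffices to verify the smallest admissible error: $\err=0$ when $\pmax\ge1/\rob$ (take $y=\pmax$), and $y=1/\rob$, $\err=1/\rob-\pmax$, when $\pmax<1/\rob$. The case $\pmax\ge1/\rob$ is immediate since $R\ge\tfrac{1-e^{-\rho}}{\rho}\con$ already exceeds the constant term. The crux is $\pmax\in[\max(1,e^{-\rho}/\rob),1/\rob)$: substituting the exact payoff and the binding error, one sees that in the variable $w=1/\pmax$ the ratio $R$ is convex (it carries a $w\ln w$ term) while the target is linear, so $\psi:=R-\text{target}$ is convex. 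The inequality $\psi\ge0$ then follows by locating the minimiser $w_0$ of $\psi$: if $w_0$ lies outside the interval one checks the nearer endpoint (where $\psi$ equals $\tfrac{1-e^{-\rho}}{\rho}(1-\con)\ge0$, or a manifestly nonnegative quantity), and if $w_0$ is interior the claim collapses to the scalar inequality $(1-\con)^2+\rob\bigl(1+\con s-e^{s}\bigr)\ge0$ with $s=(e^{-\rho}-\con)/\rob\in[0,\rho]$. I expect this last scalar inequality to be the main obstacle, because a naive endpoint-only check fails for a convex gap, and establishing that the \emph{minimum} of $\psi$ is nonnegative genuinely uses the interior constraint $s\le\rho$ (equivalently $e^{-\rho}-\con\le\rho\rob$) together with $\rob\le\con\le1$.
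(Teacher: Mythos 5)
Your reduction to the regime $y\in[1/\rob,\theta]$, the threshold-policy case split at the crossing point $u^\star=\tfrac1\rho\ln\tfrac{1}{\rob\pmax}$, and the observation that it suffices to check the minimal admissible error ($\err=0$ for $\pmax\ge 1/\rob$, $\err=1/\rob-\pmax$ otherwise) all match the paper's argument. The divergence — and the gap — is in how you lower-bound the payoff when $\pmax$ falls below the realized threshold. You keep the exact bound (payoff $\ge 1$, ratio $\ge 1/\pmax$), which after integration produces the $u^\star/\pmax$, i.e.\ $w\ln w$, term and forces you into a convexity analysis whose decisive step is the scalar inequality $(1-\con)^2+\rob(1+\con s-e^{s})\ge 0$ at the interior stationary point. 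You do not prove this inequality; you only locate it and note that it ``genuinely uses'' $s\le\rho$. That is the whole content of the hard case: the inequality is false without the constraint $e^{-\rho}=\con+\rob s$ with $s<\rho$ (equivalently $\con+\rob s<e^{-s}$), and verifying it requires, e.g., maximizing the negative term over admissible $\rob$ and checking that the resulting expression is linear in $\con$ with nonnegative values at both endpoints $\con=e^{-s}/(1+s)$ and $\con=e^{-s}$. The same remark applies, less severely, to the robustness half, where you assert that ``a short case analysis'' of the interior stationary point of a convex function closes the argument without exhibiting it; since a convex function can dip below both endpoint values, the interior check is not optional. As written, the proposal reduces the theorem to two unproven scalar facts.

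The difficulty is self-inflicted, and it is worth seeing how the paper avoids it. In the sub-threshold case the paper relaxes $\tfrac{1}{\pmax}\ge\tfrac{1}{\rthresh}=e^{\rho U}\rob$, so that both branches of the case split are exponentials in $U$. The two integrals then evaluate to $\tfrac{e^{-s}-e^{-\rho}}{\rho}\con$ and $\tfrac{e^{s}-1}{\rho}\rob$, and the smoothness bound follows from the purely algebraic rearrangement
\begin{equation*}
\frac{e^{-s}-e^{-\rho}}{\rho}\,\con+\frac{e^{s}-1}{\rho}\,\rob
=\Big(\frac{1-e^{-\rho}}{\rho}\Big)\con-\frac{\con e^{-s}-\rob}{\rho}\,(e^{s}-1)
\ge\Big(\frac{1-e^{-\rho}}{\rho}\Big)\con-\frac{\con-\rob}{\rho}\,(e^{s}-1)\;,
\end{equation*}
followed by $e^{s}-1\le\err/\pmax$; the robustness bound reduces to minimizing $\tfrac{\con}{\rob}e^{-s}+e^{s}$ on $[0,\rho]$, an elementary calculation. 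No transcendental inequality appears. If you want to salvage your tighter bound, you must actually prove the two stationary-point inequalities; otherwise, relaxing $1/\pmax$ to $e^{\rho U}\rob$ is the step you are missing.
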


The first lower bound, independent of the prediction error $\eta$, is the robustness of the algorithm, while the second bound characterizes its consistency and smoothness.
The theorem shows that, in order to guarantee a certain level of smoothness, $\RAOM$ degrades both the consistency and robustness of $\AOM$ by a factor of $(1-e^{-\rho})/\rho$, hence exhibiting a tradeoff between smoothness and both consistency and robustness.

The consistency/smoothness bounds for $y \in [1,1/\rob)$ are proved in Lemmas \ref{lem:OM-case1-smooth} and \ref{lem:OM-case2-smooth}, and the robustness in that case is $\rob \geq \left( \tfrac{1-e^{-\rho}}{\rho}\right) \rob$ because $\RAOM$ is identical to $\AOM$. Therefore, it only remains to prove the claimed bounds for $y \in [1/\rob, \theta)$.
We demonstrate in Lemma \ref{lem:RAOM-con} the consistency and smoothness of the algorithm, while the robustness is proved in Lemma \ref{lem:RAOM-rob}.

\begin{lemma}[Consitency-Smoothness]\label{lem:RAOM-con}
For any sequence of prices $p$, if $y \in [\frac{1}{\rob}, \theta]$, then 
\[
\frac{\E_U[\RAOM(p,y)]}{\pmax} \geq 
\Big(\frac{1-e^{-\rho}}{\rho} \Big)\con - \Big(\frac{\con - \rob}{\rho}\Big) \frac{\err}{\pmax}\;.
\]
\end{lemma}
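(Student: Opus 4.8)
The plan is to lower-bound the expected payoff by a quantity that holds pointwise in the randomness, so that no explicit worst-case instance has to be exhibited. Write $T := \tfrac{e^{-\rho U}}{\rob}$ for the random threshold, which ranges over $[\tfrac{e^{-\rho}}{\rob}, \tfrac{1}{\rob}]$ as $U$ ranges over $[0,1]$. For any realization of $U$ and any sequence $p$, the algorithm accepts a price of value at least $T$ whenever some price reaches $T$ (in particular when $T \le \pmax$), and otherwise accepts the last price, of value at least $1$. Hence
\[
\RAOM(p,y) \ge T\,\indic{T \le \pmax} + \indic{T > \pmax}
\]
with probability one, and taking $\E_U$ gives a bound uniform over $p$. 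Since $T \le \pmax \iff U \ge u^\star := \tfrac{1}{\rho}\ln\tfrac{1}{\rob\pmax}$, I would split the analysis according to the position of the interval $[\tfrac{e^{-\rho}}{\rob},\tfrac1\rob]$ relative to $\pmax$: the regime $\pmax \ge \tfrac1\rob$ (so $u^\star \le 0$), the regime $\tfrac{e^{-\rho}}{\rob} \le \pmax < \tfrac1\rob$ (so $u^\star \in (0,1]$), and the regime $\pmax < \tfrac{e^{-\rho}}{\rob}$ (so $u^\star > 1$).

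The two outer regimes are quick. When $\pmax \ge \tfrac1\rob$ one has $T \le \pmax$ surely, so $\E_U[\RAOM] \ge \E_U[T] = \tfrac{1-e^{-\rho}}{\rho\rob}$; dividing by $\pmax$ and using $\pmax \le \theta = \tfrac{1}{\con\rob}$ (hence $\tfrac{1}{\rob\pmax} \ge \con$) already gives $\tfrac{\E_U[\RAOM]}{\pmax} \ge \tfrac{1-e^{-\rho}}{\rho}\con$, which exceeds the claimed bound because its error term is nonnegative (here $\con \ge \rob$). When $\pmax < \tfrac{e^{-\rho}}{\rob}$ one has $T > \pmax$ surely, so $\E_U[\RAOM] \ge 1$; using $y \ge \tfrac1\rob > \pmax$ to write $\tfrac{\err}{\pmax} \ge \tfrac{1}{\rob\pmax}-1$, the target becomes a function of $\pmax$ that is monotone, so it suffices to verify it at the endpoint $\pmax = \tfrac{e^{-\rho}}{\rob}$, where it reduces to the elementary inequalities $e^{\rho}(\rho-1)+1 \ge 0$ and $e^{\rho}+e^{-\rho} \ge 2$.

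The substantive case is the middle regime. There I compute $\E_U[T\indic{U \ge u^\star}] = \tfrac{\pmax}{\rho} - \tfrac{e^{-\rho}}{\rho\rob}$ (using $e^{-\rho u^\star} = \rob\pmax$) and $\Pr(U < u^\star) = u^\star$, so $\E_U[\RAOM] \ge \tfrac{\pmax}{\rho} - \tfrac{e^{-\rho}}{\rho\rob} + u^\star$. The main obstacle is the logarithm hidden in $u^\star$; I linearize it via $\ln t \le t-1$ at $t = \rob\pmax \le 1$ to obtain $u^\star \ge \tfrac{1-\rob\pmax}{\rho}$, which turns the bound into a rational function of $\pmax$. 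Dividing by $\pmax$, substituting $\err = y - \pmax$, and lower-bounding $y \ge \tfrac1\rob$, the claim reduces to the single inequality $\tfrac{\con - e^{-\rho}}{\rob\pmax} \ge 2\con - e^{-\rho}\con - 1$.

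To finish this I would split on the sign of $\con - e^{-\rho}$. If $\con \ge e^{-\rho}$, bounding $\tfrac{1}{\rob\pmax} \ge \con$ collapses the inequality to $\con^2 \ge 2\con - 1$, i.e.\ $(\con-1)^2 \ge 0$. If $\con < e^{-\rho}$, then since $\pmax \ge \tfrac{e^{-\rho}}{\rob}$ forces $\tfrac{1}{\rob\pmax} \le e^{\rho}$, bounding the (now negative) left-hand side by $(\con-e^{-\rho})e^{\rho}$ collapses it to $e^{\rho}+e^{-\rho} \ge 2$. Both are trivially true. The only genuinely delicate points are thus the linearization of $u^\star$ and the fact that the same quantity $\tfrac{1}{\rob\pmax}$ must be bounded from opposite sides ($\ge \con$ versus $\le e^{\rho}$) depending on the sign of $\con-e^{-\rho}$; the rest is bookkeeping with $\tfrac1\con = \theta\rob$ and $\rob \le \con \le 1$.
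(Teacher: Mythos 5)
Your proposal is correct, and its skeleton coincides with the paper's: the same pointwise bound $\RAOM(p,y)\geq \rthresh\,\indic{\rthresh\leq \pmax}+\indic{\rthresh>\pmax}$, the same three regimes according to the position of $\pmax$ relative to $[\tfrac{e^{-\rho}}{\rob},\tfrac{1}{\rob}]$, and the two outer regimes are handled identically. Where you diverge is the substantive middle case. The paper first weakens $\tfrac{1}{\pmax}$ to $\tfrac{1}{\rthresh}=\rob e^{\rho U}$ \emph{inside} the expectation, so that both contributions become exponential integrals; setting $s=-\ln(\rob\pmax)$ it then uses the identity
\[
\Big(\tfrac{e^{-s}-e^{-\rho}}{\rho}\Big)\con+\Big(\tfrac{e^{s}-1}{\rho}\Big)\rob
=\Big(\tfrac{1-e^{-\rho}}{\rho}\Big)\con-\tfrac{\con e^{-s}-\rob}{\rho}\,(e^{s}-1)
\geq \Big(\tfrac{1-e^{-\rho}}{\rho}\Big)\con-\tfrac{\con-\rob}{\rho}\,(e^{s}-1),
\]
and concludes from $e^{s}-1\leq \err/\pmax$. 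No logarithm ever appears and no further case analysis is needed. You instead keep the exact mass $\Pr(\rthresh>\pmax)=u^\star=\tfrac{1}{\rho}\ln\tfrac{1}{\rob\pmax}$, which forces you to linearize the logarithm via $\ln t\leq t-1$ and then to verify a residual algebraic inequality in $z=\tfrac{1}{\rob\pmax}$ by splitting on the sign of $\con-e^{-\rho}$ and bounding $z$ from opposite sides in the two subcases. I checked that all of these steps are valid (your reduction to $(\con-e^{-\rho})z\geq 2\con-e^{-\rho}\con-1$ and the two terminal inequalities $(\con-1)^2\geq 0$ and $e^{\rho}+e^{-\rho}\geq 2$ are correct), so your proof goes through; it is simply heavier at the point where the paper's early normalization $\tfrac{1}{\pmax}\geq\rob e^{\rho U}$ makes the error term fall out of a one-line telescoping identity. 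Your version does retain a marginally tighter intermediate bound (since $\tfrac{u^\star}{\pmax}\geq\rob\,\tfrac{e^{s}-1}{\rho}$), but this extra precision is discarded by the linearization and buys nothing in the final statement.
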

\begin{proof}
Let $y \in [\frac{1}{\rob}, \theta]$, hence $\rthresh = e^{-\rho U}/\rob$, where $U$ is a uniform random variable in $[0,1]$. If $\pmax \geq \rthresh$, then the algorithm has a reward of at least $\rthresh$, and by \eqref{eq:con-rob-relation-OM} we obtain
\[
\frac{\RAOM(p,y)}{\pmax} \geq \frac{\rthresh}{\pmax}
\geq \frac{e^{-\rho U}}{\theta \rob}
= e^{-\rho U} \con \;,
\]
and if $\pmax < \rthresh$ then
\[
\frac{\RAOM(p,y)}{\pmax} 
\geq \frac{1}{\rthresh}
= e^{\rho U} \rob\;.
\]
Let us denote by $s = -\ln(\rob \pmax)$. Observing that
\[
\pmax \geq \rthresh
\iff \pmax \geq e^{-\rho U}/\rob
\iff e^{\rho U} \geq \frac{1}{\rob \pmax}
\iff U \geq \frac{-\ln(r \pmax)}{\rho} = \frac{s}{\rho}\;,
\]
we deduce that
\begin{equation}\label{eq:ROAM-generic}
\frac{\E_U[\RAOM(p,y)]}{\pmax}
\geq \con \E[e^{-\rho U} \indic{U \geq \frac{s}{\rho}}] + \rob \E[e^{\rho U} \indic{U < \frac{s}{\rho}}]\;.
\end{equation}
Assume that $\pmax \in [\frac{e^{-\rho}}{\rob}, \frac{1}{\rob}]$, i.e. $s \in [0,\rho]$. The two terms on the right-hand side above can be computed easily
\begin{align*}
\E[e^{-\rho U} \indic{U \geq \frac{s}{\rho}}]
&= \int_{s/\rho}^1 e^{-\rho u} \diff u
= \left[ \frac{-e^{-\rho u}}{\rho} \right]_{s/\rho}^1
= \frac{e^{-s} - e^{-\rho}}{\rho}\;, \\
\E[e^{\rho U} \indic{U < \frac{s}{\rho}}]
&= \int_0^{s/\rho} e^{\rho u} \diff u
= \left[ \frac{e^{\rho u}}{\rho}\right]_0^{s/\rho} 
= \frac{e^s - 1}{\rho}\;,
\end{align*}
and we obtain by substituting into \eqref{eq:ROAM-generic} that
\begin{align}
\frac{\E_U[\RAOM(p,y)]}{\pmax}
&\geq \left( \frac{e^{-s} - e^{-\rho}}{\rho} \right) \con + \left( \frac{e^s - 1}{\rho} \right) \rob \label{eq:RAOM-lb-0<s<rho}\\
&= \left( \frac{1 - e^{-\rho}}{\rho} \right) \con +  \left( \frac{1-e^s}{\rho} \right) \con e^{-s} + \left( \frac{e^s - 1}{\rho} \right) \rob \nonumber\\
&= \left( \frac{1 - e^{-\rho}}{\rho} \right) \con -   \frac{\con e^{-s} - \rob}{\rho} (e^s-1)\nonumber\\
&\geq \left( \frac{1 - e^{-\rho}}{\rho} \right) \con -   \frac{\con - \rob}{\rho} (e^s-1)\;. \nonumber
\end{align}
By definition of $s$, we have
\[
e^s - 1 = \frac{1}{\rob \pmax} - 1
= \frac{1/\rob - \pmax}{\pmax}
\leq \frac{y-\pmax}{\pmax}
= \frac{\err}{\pmax}\;,
\]
which yields 
\[
\frac{\E_U[\RAOM(p,y)]}{\pmax}
\geq  \left( \frac{1 - e^{-\rho}}{\rho} \right) \con -   \left(\frac{\con - \rob}{\rho}\right) \frac{\err}{\pmax}\;.
\]
which corresponds to the consistency/smoothness bound stated in the theorem. However, we only proved it for $\pmax \in [\frac{e^{-\rho}}{r}, \frac{1}{r}]$. We demonstrate in the following that the bound remains true if $\pmax$ is outside that interval.   
If $\pmax \geq 1/\rob$, i.e. $s \leq 0$, then \eqref{eq:ROAM-generic} gives
\begin{align}
\frac{\E_U[\RAOM(p,y)]}{\pmax}
&\geq  \con \E[e^{-\rho U}]
= \left( \frac{1-e^{-\rho}}{\rho} \right) \con \label{eq:RAOM-lb-s<0}\\
&\geq \left( \frac{1 - e^{-\rho}}{\rho} \right) \con -   \left(\frac{\con - \rob}{\rho}\right) \frac{\err}{\pmax}\;. \nonumber
\end{align}
On the other hand, if $\pmax \leq e^{-\rho}/\rob$, i.e. $s \geq \rho$, then \eqref{eq:ROAM-generic} again gives
\begin{equation}\label{eq:RAOM-lb-s>rho}
\frac{\E_U[\RAOM(p,y)]}{\pmax}
\geq  \rob \E[e^{\rho U} \indic{U <1}]
= \left( \frac{e^{\rho}-1}{\rho} \right) \rob\;,
\end{equation}
and it holds that
\[
\frac{\err}{\pmax} = \frac{y}{\pmax} - 1 \geq \frac{1/\rob}{e^{-\rho}/\rob} - 1
= e^{\rho}-1
\geq 1 - e^{-\rho}\;,
\]
hence
\begin{align*}
\frac{\E_U[\RAOM(p,y)]}{\pmax}
&\geq \left( \frac{e^{\rho}-1}{\rho} \right) \rob\\
&\geq  \left( \frac{1-e^{-\rho}}{\rho} \right) \rob\\
&= \left( \frac{1-e^{-\rho}}{\rho} \right) \con - \left(\frac{\con-\rob}{\rho}\right)(1-e^{-\rho})\\
&\geq \left( \frac{1 - e^{-\rho}}{\rho} \right) \con -   \left(\frac{\con - \rob}{\rho}\right) \frac{\err}{\pmax}\;.
\end{align*}
The claimed lower bound is therefore true for all values of $\pmax$.
\end{proof}

\begin{lemma}[Robustness]\label{lem:RAOM-rob}
For any sequence of prices $p$, if $y \in [\frac{1}{\rob},\theta]$, then
\[
\frac{\E_U[\RAOM(p,y)]}{\pmax} \geq \left( \frac{1-e^{-\rho}}{\rho}\right) \rob\;.
\]
\end{lemma}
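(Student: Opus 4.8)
The plan is to build on the decomposition \eqref{eq:ROAM-generic} already derived in the proof of Lemma~\ref{lem:RAOM-con}, which holds verbatim for every price sequence $p$ and every $y \in [\tfrac{1}{\rob},\theta]$, and to lower bound its right-hand side by $\bigl(\tfrac{1-e^{-\rho}}{\rho}\bigr)\rob$ uniformly in $\pmax$. The only new ingredient is the inequality $\con \ge \rob$, which is immediate from the definitions: the infimum defining $\rob$ ranges over all predictions $y$ (taking $y=\pmax$ as one choice) while the infimum defining $\con$ fixes $y=\pmax$, so $\rob \le \con$. I would keep the substitution $s = -\ln(\rob\pmax)$ from the earlier proof and split on the position of $s$ relative to $[0,\rho]$, mirroring the three regimes of Lemma~\ref{lem:RAOM-con}.

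For the two boundary regimes the work is already done. When $\pmax \ge 1/\rob$, i.e.\ $s \le 0$, inequality \eqref{eq:RAOM-lb-s<0} gives $\tfrac{\E_U[\RAOM(p,y)]}{\pmax} \ge \bigl(\tfrac{1-e^{-\rho}}{\rho}\bigr)\con$, and $\con \ge \rob$ closes the case. When $\pmax \le e^{-\rho}/\rob$, i.e.\ $s \ge \rho$, inequality \eqref{eq:RAOM-lb-s>rho} gives $\tfrac{\E_U[\RAOM(p,y)]}{\pmax} \ge \bigl(\tfrac{e^{\rho}-1}{\rho}\bigr)\rob$, which I would finish with the elementary bound $e^{\rho}-1 \ge 1-e^{-\rho}$ (equivalently $e^{\rho}+e^{-\rho} \ge 2$).

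The remaining regime is $s \in [0,\rho]$, where the sharp intermediate bound \eqref{eq:RAOM-lb-0<s<rho} reads $\tfrac{\E_U[\RAOM(p,y)]}{\pmax} \ge \tfrac{1}{\rho}\bigl((e^{-s}-e^{-\rho})\con + (e^s-1)\rob\bigr)$. Since $s \le \rho$ makes the coefficient $e^{-s}-e^{-\rho}$ nonnegative, I would replace $\con$ by the smaller $\rob$ there, collapsing the bound to $\tfrac{\rob}{\rho}\bigl(e^s + e^{-s} - 1 - e^{-\rho}\bigr)$, and then apply $e^s + e^{-s} \ge 2$ to recover exactly $\bigl(\tfrac{1-e^{-\rho}}{\rho}\bigr)\rob$. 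Combining the three regimes yields the claim for all $\pmax \in [1,\theta]$.

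The main obstacle is precisely this middle regime: unlike the consistency bound, one cannot usefully track $\con$ and $\rob$ separately, because the worst case over $\pmax$ sits in the interior of $[e^{-\rho}/\rob,\,1/\rob]$, and a direct minimization of $(e^{-s}-e^{-\rho})\con+(e^s-1)\rob$ produces an unwieldy expression in the ratio $\con/\rob$. The clean resolution is to observe that downgrading $\con$ to $\rob$ is harmless for a robustness bound and converts the quantity into $\rob/\rho$ times the convex function $e^s+e^{-s}-1-e^{-\rho}$, whose minimum over $s$ is attained trivially at $s=0$. Checking $\con \ge \rob$ and the sign of $e^{-s}-e^{-\rho}$ are the only steps that require attention.
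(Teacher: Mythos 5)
Your proof is correct and follows the same skeleton as the paper's: the same decomposition \eqref{eq:ROAM-generic}, the same substitution $s=-\ln(\rob\pmax)$, and an identical treatment of the two boundary regimes $s\le 0$ and $s\ge\rho$ via \eqref{eq:RAOM-lb-s<0}, \eqref{eq:RAOM-lb-s>rho}, $\con\ge\rob$, and $e^{\rho}-1\ge 1-e^{-\rho}$. The only divergence is in the interior regime $s\in[0,\rho]$. The paper keeps $\con$ and $\rob$ separate, minimizes $s\mapsto \tfrac{\con}{\rob}e^{-s}+e^{s}$ over $[0,\rho]$, and then splits into two sub-cases according to whether $\sqrt{\con/\rob}$ exceeds $e^{\rho}$, finishing with a monotonicity argument for $u\mapsto -e^{-\rho}u^{2}+2u-1$ on $[1,e^{\rho}]$. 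You instead observe that the coefficient $e^{-s}-e^{-\rho}$ multiplying $\con$ in \eqref{eq:RAOM-lb-0<s<rho} is nonnegative for $s\le\rho$, so $\con$ can be downgraded to $\rob$ at once, collapsing the bound to $\tfrac{\rob}{\rho}\left(e^{s}+e^{-s}-1-e^{-\rho}\right)\ge\tfrac{\rob}{\rho}\left(1-e^{-\rho}\right)$ by the elementary inequality $e^{s}+e^{-s}\ge 2$. This is strictly simpler: it eliminates the case split on $\con/\rob$ and replaces the optimization with a one-line convexity bound, and nothing is lost, since the paper's more detailed intermediate expression is ultimately relaxed to the same constant anyway. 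Both arguments rely on $\con\ge\rob$ --- the paper uses it implicitly when asserting $\sqrt{\con/\rob}\in[1,e^{\rho}]$ --- and your justification of that inequality directly from the definitions (the robustness infimum ranges over a superset of the consistency infimum) is sound.
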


\begin{proof}
Consider a sequence of prices $p$. Using Inequality \eqref{eq:RAOM-lb-s<0} from the proof of Lemma \ref{lem:RAOM-con}, if $\pmax \geq 1/\rob$ then
\[
\frac{\E_U[\RAOM(p,y)]}{\pmax}
\geq \left( \frac{1-e^{-\rho}}{\rho}\right) \con
\geq \left( \frac{1-e^{-\rho}}{\rho}\right) \rob\;,
\]
and if $\pmax \leq e^{-\rho}/\rob$, then by Inequality \eqref{eq:RAOM-lb-s>rho} we have
\[
\frac{\E_U[\RAOM(p,y)]}{\pmax}
\geq \left( \frac{1-e^{-\rho}}{\rho}\right) \rob\;.
\]
It remains to prove the same lower bound when $\pmax \in (\frac{e^{-\rho}}{\rob}$. Assume that that is the case, and let $s = -\ln(\rob \pmax) \in [0,\rho]$. Inequality \eqref{eq:RAOM-lb-0<s<rho} gives that
\begin{align*}
\frac{\E_U[\RAOM(p,y)]}{\pmax}
&\geq \left( \frac{e^{-s}-e^{-\rho}}{\rho} \right) \con + \left( \frac{e^s-1}{\rho}\right) \rob\\
&= \frac{\rob}{\rho}\left( \frac{\con}{\rob} e^{-s} + e^s - \frac{\con}{\rob}e^{-\rho}-1 \right)\;.
\end{align*}
The function $s \mapsto \frac{\con}{\rob} e^{-s} + e^s$ is minimal on $\R$ for $e^s = \sqrt{\frac{\con}{\rob}}$. Therefore, on the interval $[0,\rho]$, it is minimal for $e^s = \min(e^\rho, \sqrt{\frac{\con}{\rob}})$. If $\sqrt{\frac{\con}{\rob}} \geq e^\rho$ then 
\begin{align*}
\frac{\E_U[\RAOM(p,y)]}{\pmax}
&\geq \frac{\rob}{\rho}\left( \frac{\con}{\rob} e^{-\rho} + e^\rho - \frac{\con}{\rob}e^{-\rho}-1 \right)\\
&= \left( \frac{e^{\rho}-1}{\rho}\right) \rob\\
&\geq \left( \frac{1-e^{-\rho}}{\rho}\right) \rob\;.
\end{align*}
On the other hand, if $\sqrt{\frac{\con}{\rob}} < e^\rho$ then
\begin{align*}
\frac{\E_U[\RAOM(p,y)]}{\pmax}
&\geq \frac{\rob}{\rho}\left( 2 \sqrt{\frac{\con}{\rob}} - \frac{\con}{\rob}e^{-\rho}-1 \right)\;.
\end{align*}
The function $u \mapsto -e^{-\rho} u^2 + 2u - 1$ is non-decreasing on $[1,e^\rho]$, hence, since $\sqrt{\con/\rob \in [1,e^\rho]}$ then
\begin{align*}
\frac{\E_U[\RAOM(p,y)]}{\pmax}
&\geq \frac{\rob}{\rho}\left( 2 - e^{-\rho}-1 \right)
= \left( \frac{1 - e^{-\rho}}{\rho}\right) \rob\;,
\end{align*}
which concludes the proof.

\end{proof}

\section{Average-Case Analysis in Ski-Rental}

In this section, we focus on \textit{ski-rental}, which is one of the fundamental problems in competitive analysis. In this problem, the decision-maker must choose each day between renting a ski for a unit cost or buying it for a fixed cost $b$, allowing them to ski for free for the remainder of the ski season, which has an unknown duration $x$. The objective is to minimize the total cost incurred from renting and buying. 
To simplify our presentation, we consider the continuous version of the problem, where the number of skiing days increases continuously, with $x, b > 0$. In this model, the cost of renting for a time period $[t, t + \delta)$ is equal to $\delta$.

The ski-rental problem was one of the first problems studied in the learning-augmented framework. \citet{purohit2018improving} proved that, with a prediction $y$ of $x$, there is a deterministic algorithm with a competitive ratio of at most
\[
\min\left( 1+\frac{1}{\lambda} \; , \; (1+\lambda) + \frac{|x-y|}{(1-\lambda)\min(x,b)} \right)\;.
\]
where $\lambda \in [0,1]$. It was proved later in \cite{wei2020optimal} that the consistency $(1+\lambda)$ and robustness $(1+\frac{1}{\lambda})$ are Pareto-optimal. 
On the other hand, \cite{benomar2023advice} analyzed the same algorithm under the assumption that $\Pr(\indic{y \geq b} = \indic{x \geq b}) = \p$ for some $\p \in [1/2,1]$, and showed how to optimally choose $\lambda$ to minimize the expected cost of the algorithm.

In the following, we combine the analysis of average-case performance with the criteria of consistency, robustness, and smoothness. To achieve this, we propose a modified version $\ASR$ of the algorithm introduced by \cite{purohit2018improving}, which is parameterized by two parameters $\lambda, \rho \in [0,1]$.

\begin{algorithm}[h]
\caption{$\ASR(x,y)$}\label{algo:ski}
\SetKwInput{Input}{Input}
   \SetKwInOut{Output}{Output}
   \lIf{$y \geq b$}{
    buy at time $\lambda b$
   } \lIf{$y < b$}{
    buy at time $(1+\rho(\tfrac{1}{\lambda}-1))b$
   }
\end{algorithm}

Note that the algorithm of \cite{purohit2018improving} corresponds to $\ASR$ with $\rho = 1$, i.e. buying at time $b/\lambda$ if $y < b$. We start by proving the consistency, robustness, and smoothness of this algorithm.

\begin{theorem}\label{thm:sr-smoothness}
For all $x,y > 0$, denoting by $\eta = |x-y|$, it holds that $\frac{\ASR(x,y)}{\min(x,b)}$ is at most
\[
\min\left( 1+\frac{1}{\lambda} \; , \; (1+\lambda) + \left(1+\frac{\lambda}{\rho}\right) \frac{\err}{\min(x,b)} \right)\;.
\]
\end{theorem}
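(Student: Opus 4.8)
The plan is to analyze the algorithm $\ASR$ by cases according to whether the prediction $y$ is correct (i.e.\ whether $\indic{y\geq b}=\indic{x\geq b}$), and within each case compare the realized cost to $\min(x,b)=\OPT(x)$. The robustness bound $1+\tfrac{1}{\lambda}$ should hold uniformly, while the smoothness/consistency bound should emerge from tracking how the buying time shifts when the prediction is slightly off. Let me set up the four combinations of $\sign(y-b)$ and $\sign(x-b)$.

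\textbf{Setting up the cost expression.}
First I would write down $\ASR(x,y)$ explicitly. If $y\geq b$, the algorithm buys at time $\lambda b$, so its cost is $\lambda b + b = (1+\lambda)b$ if $x\geq \lambda b$ (the season outlasts the buying time), and simply $x$ if $x<\lambda b$. If $y<b$, the buying time is $t_\rho := \bigl(1+\rho(\tfrac{1}{\lambda}-1)\bigr)b$, giving cost $t_\rho + b$ if $x\geq t_\rho$ and cost $x$ otherwise. I expect the cleanest route is to treat the two branches $y\geq b$ and $y<b$ separately, and in each branch bound the ratio $\ASR(x,y)/\min(x,b)$ both by the robustness constant and by the affine-in-error consistency/smoothness term, then take the minimum.

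\textbf{Robustness.}
For the robustness part I would verify $\ASR(x,y)\leq (1+\tfrac{1}{\lambda})\min(x,b)$ in all cases. When $y\geq b$ and $x\geq\lambda b$, the cost is $(1+\lambda)b$; dividing by $\min(x,b)$, the worst case is $x=\lambda b$ giving ratio $(1+\lambda)b/(\lambda b)=1+\tfrac{1}{\lambda}$. When $y<b$, the buying time $t_\rho$ lies in $[b, b/\lambda]$ since $\rho\in[0,1]$, so the cost $t_\rho+b\le b/\lambda+b=(1+\tfrac1\lambda)b$ when $x$ is large; and for smaller $x$ the ratio is controlled similarly. The key observation is that $t_\rho\le b/\lambda$, so the modified algorithm never buys later than the original Purohit–Svitkina algorithm, hence inherits its robustness.

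\textbf{Consistency and smoothness.}
The main work—and the likely obstacle—is the smoothness term $(1+\tfrac{\lambda}{\rho})\tfrac{\eta}{\min(x,b)}$. This should be sharpest when the prediction crosses the threshold $b$: if $y<b$ but $x\geq b$ (so the algorithm buys late, at $t_\rho$, though it should have acted sooner), the excess cost over optimal is roughly $t_\rho+b-b = \rho(\tfrac1\lambda-1)b$, and since the prediction error $\eta=|x-y|\geq b-y$ can be small while this excess is fixed, I must relate the excess to $\eta$ through the factor $\tfrac{\lambda}{\rho}$. Concretely I expect to bound $\rho(\tfrac1\lambda-1)b \le (1+\tfrac\lambda\rho)\eta$ in the regime where this branch is active, using that a wrong-side prediction forces $\eta$ to be at least of order $b$ scaled appropriately. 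The factor $\tfrac{\lambda}{\rho}$ in the denominator is exactly what captures the tradeoff: shrinking $\rho$ toward $0$ makes $t_\rho\to b$ (better smoothness near the correct consistency value) but inflates the coefficient $\tfrac\lambda\rho$ on the error. I would finish by checking the remaining same-side cases ($y,x$ both $\ge b$ or both $<b$), where the error enters only through the $x$ versus buying-time comparison and yields a smoothness coefficient no larger than $1+\tfrac\lambda\rho$, and then combining all cases via the minimum.
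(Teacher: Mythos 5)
Your proposal follows essentially the same route as the paper's proof: a case split on $\indic{y\geq b}$ and on the position of $x$ relative to the buying time, robustness from the buying time lying in $[b,b/\lambda]$, and smoothness in the critical case ($y<b$, $x$ past the buying time $\beta b$ with $\beta=1+\rho(\tfrac1\lambda-1)$) from the lower bound $\eta\geq(\beta-1)b$, which gives exactly the coefficient $\tfrac{\beta-\lambda}{\beta-1}=1+\tfrac{\lambda}{\rho}$. One small correction when you execute it: the relevant excess over $(1+\lambda)\min(x,b)$ in that case is $(\beta-\lambda)b$, not $\rho(\tfrac1\lambda-1)b=(\beta-1)b$, and the inequality $(\beta-\lambda)b\leq(1+\tfrac{\lambda}{\rho})\eta$ holds with equality at $\eta=(\beta-1)b$, so there is no slack to spare.
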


The theorem above demonstrates that, for any value of $\rho \in [0, 1]$, the algorithm $\ASR$ achieves Pareto-optimal consistency and robustness, albeit with varying levels of smoothness. Furthermore, note that our analysis is tighter than that of \cite{purohit2018improving}. Specifically, when $\rho = 1$, we obtain a smoothness factor of $1+\lambda$ instead of $\frac{1}{1 - \lambda}$.

\begin{proof}
For simplicity, let us denote by $\beta = (1+\rho(\tfrac{1}{\lambda}-1))$. Note that $1 \leq \beta \leq \frac{1}{\lambda}$, and recall that $\min(x,b) = \min(x,b)$.

\noindent
\textbf{Robustness.} We first prove the robustness bound. If $y \geq b$:
\begin{itemize}
    \item if $x < \lambda b$ then $\ASR(x,y) = x = \min(x,b)$,
    \item if $\lambda b \leq x < b$ then $\ASR(x,y) = (1+\lambda)b \leq (1 + \tfrac{1}{\lambda}) x = (1 + \tfrac{1}{\lambda}) \min(x,b)$,
    \item if $b \leq x$, then $\ASR(x,y) = (1+\lambda)b = (1+\lambda)\min(x,b) \leq (1 + \tfrac{1}{\lambda}) \min(x,b)$.
\end{itemize}
On the other hand, if $y< b$:
\begin{itemize}
    \item if $x < b$ then $\ASR(x,y) = x = \min(x,b)$,
    \item if $b \leq x < \beta b$ then $\ASR(x,y) = x < \beta b = \beta \min(x,b) \leq (1 + \tfrac{1}{\lambda}) \min(x,b)$,
    \item if $x \geq \beta x$ then $\ASR(x,y) = (1+\beta) b = (1+\beta)\min(x,b) \leq (1 + \tfrac{1}{\lambda}) \min(x,b)$.
\end{itemize}
In all the cases, it always holds that $\ASR(x,y) \leq (1 + \tfrac{1}{\lambda}) \min(x,b)$.

\noindent
\textbf{Consistency/Smoothness.} Let us first consider the case of $y \geq b$.
\begin{itemize}
    \item if $x < \lambda b$ then $\ASR(x,y) = x = \min(x,b)$,
    \item if $\lambda b \leq x < b$ then $\ASR(x,y) = (1+\lambda)b \leq (1+\lambda)y \leq (1+\lambda)\min(x,b) + (1+\lambda)\err$,
    \item if $b \leq x$, then $\ASR(x,y) = (1+\lambda)b = (1+\lambda)\min(x,b)$.
\end{itemize}
In the case of $y < b$, we obtain that
\begin{itemize}
    \item if $x < b$ then $\ASR(x,y) = x = \min(x,b)$,
    \item if $b \leq x < \beta b$ then \begin{align*}
    \ASR(x,y) 
    &= x \leq y+\err \\
    &\leq b+\err
    = (1+\lambda)b - \lambda b + \err\\
    &\leq (1+\lambda) \min(x,b) + (1-\tfrac{\lambda}{\beta})\err\\
    &\leq (1+\lambda)\min(x,b) + \frac{\beta-\lambda}{\beta - 1} \err\;
    \end{align*}
    where we used in the penultimate inequality that $\err = x-y \leq x \leq \beta b$.
    \item if $x \geq \beta x$ then
    \begin{align*}
    \ASR(x,y) 
    &= (1+\beta) b
    = (1+\lambda)b + (\beta-\lambda) b \\
    &=  (1+\lambda)\min(x,b) + (\beta-\lambda) b \\
    &\leq (1+\lambda)\min(x,b) + \frac{\beta-\lambda}{\beta - 1} \err\;,
    \end{align*}
    where we used in the last inequality that $\err = x-y \geq (\beta-1) b$.
\end{itemize}
All in all, we deduce that
\begin{align*}
\forall x,y: \quad \ASR(x,y)
\leq (1+\lambda)\min(x,b) + \max \left(1+\lambda, \frac{\beta-\lambda}{\beta-1} \right) \err\;,
\end{align*}
and by definition of $\beta$ we have
\[
\frac{\beta - \lambda}{\beta - 1}
= \frac{(\rho + \lambda)(\tfrac{1}{\lambda}-1)}{\rho(\tfrac{1}{\lambda}-1)}
= 1 + \frac{\lambda}{\rho}\;,
\]
hence
\[
\ASR(x,y) \leq (1+\lambda)\min(x,b) + \left(1+\frac{\lambda}{\rho}\right) \err\;,
\]
which concludes the proof.
\end{proof}

In the subsequent theorem, we assume that the prediction $y$ lies on the same side of $b$ as $x$ with a probability of at least $\p \in \left[\frac{1}{2}, 1\right]$, and we establish an upper bound on the expected cost of Algorithm \ref{algo:ski}. The assumption on $y$ is pertinent for this setting, as the decision made by the algorithm depends only on where $y$ is situated compared to $b$. The same assumption was considered in \cite{benomar2023advice}.

\begin{theorem}\label{thm:sr-avg}
For all $x>0$, if the prediction $y$ is a random variable satisfying $\Pr(\indic{y \geq b} = \indic{x\geq b}) \geq \p$ for some $\p \in [\frac{1}{2},1]$, then $\frac{\E_y[\ASR(x,y)]}{\min(x,b)}$ is at most
\[
\max\left(2 + (\tfrac{1}{\lambda}-1)\big((1-\p)\rho -\p\lambda\big), 1 + \frac{1-\p}{\lambda}\right)\;.
\]
\end{theorem}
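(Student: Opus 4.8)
The plan is to bound the expected cost $\E_y[\ASR(x,y)]$ by conditioning on whether the prediction lands on the correct side of $b$, and then to split the analysis into the two regimes $x \geq b$ and $x < b$, since the algorithm's behavior and the relevant "correct side" event are determined entirely by $\indic{y \geq b}$ versus $\indic{x \geq b}$. Let me write $q = \Pr(\indic{y\geq b} = \indic{x \geq b})$, so $q \geq \p$. In each regime, the cost $\ASR(x,y)$ takes at most two distinct deterministic values depending only on $\indic{y \geq b}$: one value when the prediction is on the correct side, and one value when it is on the wrong side. Since the algorithm's output does not depend on the magnitude of $y$ (only on $\indic{y\geq b}$), the expectation reduces to a simple two-point average weighted by $q$ and $1-q$.

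First I would handle the case $x \geq b$, where $\min(x,b) = b$ and the correct prediction is $y \geq b$. When $y \geq b$ the algorithm buys at $\lambda b$ and pays $(1+\lambda)b$; when $y < b$ it buys at $\beta b = (1+\rho(\tfrac1\lambda-1))b$, and since $x \geq b$, provided $x \geq \beta b$ it pays $(1+\beta)b$ (one must check the worst case is $x$ large, giving the full $(1+\beta)b$ rather than the intermediate value $x$). Dividing by $b$, the conditional expectation is at most $q(1+\lambda) + (1-q)(1+\beta)$. Substituting $\beta = 1+\rho(\tfrac1\lambda-1)$ and simplifying should yield the first term $2 + (\tfrac1\lambda-1)\big((1-\p)\rho - \p\lambda\big)$ after using $q \geq \p$ and checking monotonicity in $q$ of the resulting expression. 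Second I would handle $x < b$, where $\min(x,b)=x$ and the correct prediction is $y < b$: when $y < b$ the algorithm buys at $\beta b > b > x$, so it only rents and pays $x$ (ratio $1$); when $y \geq b$ it buys at $\lambda b$, and if $x \geq \lambda b$ it pays $(1+\lambda)b \leq (1+\tfrac1\lambda)x$, but the worst ratio over $x < b$ is attained as $x \to \lambda b$, giving cost $(1+\lambda)b$ and ratio $(1+\lambda)b/x \to \tfrac{1+\lambda}{\lambda} = 1 + \tfrac1\lambda$. Thus the conditional expectation is at most $q \cdot 1 + (1-q)(1+\tfrac1\lambda)$, which equals $1 + (1-q)/\lambda \leq 1 + (1-\p)/\lambda$ using $q \geq \p$.

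The final bound follows by taking the maximum over the two regimes, since the theorem must hold for all $x > 0$. I would need to verify carefully that in each regime the worst-case over the continuous values of $x$ (within that regime) is correctly identified — this is where the subtlety lies. \textbf{The main obstacle} is the monotonicity/worst-case argument in $q$: the assumption only gives $\Pr(\cdot) \geq \p$, so I must confirm that each conditional expectation is a monotone function of $q$ in the direction that lets me replace $q$ by the lower bound $\p$. For the $x < b$ term this is immediate since the wrong-side cost exceeds the right-side cost, so larger $q$ only helps and $q \geq \p$ gives the stated bound. For the $x \geq b$ term one must check the sign of $(1+\beta)-(1+\lambda) = \beta - \lambda = (\rho+\lambda)(\tfrac1\lambda - 1) \geq 0$, confirming the wrong-side cost again dominates so that decreasing $q$ to $\p$ yields the upper bound; this sign check, together with correctly pinning down the worst-case $x$ in each regime (in particular ensuring $x \geq \beta b$ is the binding subcase when $x \geq b$), is the crux of the argument. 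Everything else is routine algebraic simplification of $\p(1+\lambda)+(1-\p)(1+\beta)$ into the claimed closed form.
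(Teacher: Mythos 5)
Your proposal is correct and follows essentially the same route as the paper's proof: the same conditioning on $\indic{y\geq b}$ versus $\indic{x\geq b}$, the same split into the regimes $x \geq b$ and $x < b$ with the worst subcase $x \geq \beta b$ identified in the first regime, the same two-point average $\p(1+\lambda)+(1-\p)(1+\beta)$, and the same final monotonicity-in-$\p$ observation to pass from equality to the inequality $\Pr(\cdot)\geq \p$. No gaps.
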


Note that the upper bound above is non-decreasing with $\rho$, regardless of the values of $\p$ or $\lambda$. Hence, $\rho=0$ is the optimal choice for achieving the best average-case performance of the algorithm. Moreover, if the value of $\p$ is known, then $\lambda$ can also be chosen optimally to minimize the upper bound, as demonstrated in the following corollary.

\begin{proof}
Let us denote by $\beta = 1 + \rho(\tfrac{1}{\lambda} - 1)$, and assume that $\Pr(\indic{y \geq b} = \indic{x\geq b}) = \p$. If $x \geq b$, then 
\begin{itemize}
    \item with probability $\p$: $y \geq b$ and
    \[
    \ASR(x,y) = (1+\lambda)b = (1 + \lambda) \min(x,b)\;,
    \]
    \item with probability $1-\p$: $y<b$ and
    \begin{itemize}
        \item if $x < \beta b$ then $\ASR(x,y) = x \leq \beta b = \beta \min(x,b)$,
        \item if $x \geq \beta b$ then $\ASR(x,y) = (1+\beta) b = (1+\beta) \min(x,b)$,
    \end{itemize}
\end{itemize}
hence we have for $x \geq b$ that
\begin{align}
\frac{\E_y[\ASR(x,y)]}{\min(x,b)} 
&\leq \p(1+\lambda) + (1-\p)(1+\beta) \nonumber\\
&= 1+ \lambda \p + \beta(1-\p) \nonumber\\
&= 1 + \lambda \p + (1 + (\tfrac{1}{\lambda}-1)\rho)(1-\p) \nonumber\\
&= 2 + (\tfrac{1}{\lambda}-1)((1-\p)\rho -\p\lambda)\;. \label{eq:avg-x>b}
\end{align}
On the other hand, for $x < b$
\begin{itemize}
    \item with probability $\p$: $y<b$ and $\ASR(x,y) = x = \min(x,b)$
    \item with probability $1-\p$: $y>b$ and 
    \begin{itemize}
        \item if $x < \lambda b$ then $\ASR(x,y) = x = \min(x,b)$,
        \item if $x < \lambda b$ then $\ASR(x,y) = (1+\lambda) b \leq (1+\tfrac{1}{\lambda})x = (1+\tfrac{1}{\lambda}) \min(x,b)$,
    \end{itemize}
\end{itemize}
which gives for $x < b$ that
\begin{equation}\label{eq:avg-x<b}
\frac{\E_y[\ASR(x,y)]}{\min(x,b)} 
\leq \p + (1-\p)(1+\tfrac{1}{\lambda})
= 1 + \frac{1-\p}{\lambda}\;.    
\end{equation}
We deduce from \eqref{eq:avg-x>b} and \eqref{eq:avg-x<b} that
\[
\forall x: \quad
\frac{\E_y[\ASR(x,y)]}{\min(x,b)} 
\leq \max\left(2 + (\tfrac{1}{\lambda}-1)\big((1-\p)\rho -\p\lambda\big), 1 + \frac{1-\p}{\lambda}\right)\;.
\]
Finally, observe that the right-hand term is a non-increasing function of $\p$, hence the upper bound holds also if $\Pr(\indic{y \geq b} = \indic{x\geq b}) \geq \p$.
\end{proof}

\begin{corollary}\label{cor:sr-avg}
Under the same assumptions of Theorem \ref{thm:sr-avg}, it holds for $\rho = 0$ and $\lambda^* = \frac{1}{2} \sqrt{(\frac{1}{\p}-1)( \frac{1}{\p} + 3 )} - \tfrac{1}{2}( \frac{1}{\p} - 1 )$ that
\[
\frac{\E_y[\A^{\operatorname{SR}}_{\lambda^*,0}(x,y)]}{\min(x,b)} 
\leq \frac{3-\p}{2} + \frac{1}{2} \sqrt{(1-\p)(1+3\p)}\;.
\]
\end{corollary}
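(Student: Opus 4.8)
The plan is to specialize Theorem~\ref{thm:sr-avg} to the case $\rho = 0$ and then optimize over $\lambda$. Setting $\rho = 0$ in the upper bound of Theorem~\ref{thm:sr-avg} eliminates the $(1-\p)\rho$ term, so the bound becomes
\[
\frac{\E_y[\A^{\operatorname{SR}}_{\lambda,0}(x,y)]}{\min(x,b)}
\leq \max\left(2 - (\tfrac{1}{\lambda}-1)\p\lambda,\; 1 + \frac{1-\p}{\lambda}\right)
= \max\left(2 - \p(1-\lambda),\; 1 + \frac{1-\p}{\lambda}\right).
\]
I would first simplify the left term to $2 - \p + \p\lambda$. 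The right-hand side is now the maximum of two explicit functions of $\lambda$ on $[0,1]$.

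\textbf{Locating the optimizer.} The key observation is that $g_1(\lambda) := 2 - \p + \p\lambda$ is increasing in $\lambda$ while $g_2(\lambda) := 1 + \frac{1-\p}{\lambda}$ is decreasing in $\lambda$. Hence the maximum $\max(g_1,g_2)$ is minimized at the crossing point where $g_1(\lambda) = g_2(\lambda)$. Setting them equal,
\[
2 - \p + \p\lambda = 1 + \frac{1-\p}{\lambda},
\]
I would clear denominators to get the quadratic $\p\lambda^2 + (1-\p)\lambda - (1-\p) = 0$. Solving via the quadratic formula and keeping the positive root should yield exactly $\lambda^* = \frac{1}{2}\sqrt{(\frac{1}{\p}-1)(\frac{1}{\p}+3)} - \frac{1}{2}(\frac{1}{\p}-1)$; I expect this to require dividing through by $\p$ and recognizing the discriminant $(1-\p)^2 + 4\p(1-\p) = (1-\p)(1+3\p)$, which matches the form under the square root after factoring out $\p$.

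\textbf{Computing the value and verifying feasibility.} Once $\lambda^*$ is identified, I would substitute back into the simpler of the two expressions, namely $g_1(\lambda^*) = 2 - \p + \p\lambda^*$, since evaluating the increasing linear branch is cleaner than the reciprocal branch. Plugging in $\p\lambda^* = \frac{\p}{2}\sqrt{(\frac{1}{\p}-1)(\frac{1}{\p}+3)} - \frac{\p}{2}(\frac{1}{\p}-1)$ and simplifying $\p(\frac{1}{\p}-1) = 1-\p$ and $\p\sqrt{(\frac{1}{\p}-1)(\frac{1}{\p}+3)} = \sqrt{\p^2(\frac{1}{\p}-1)(\frac{1}{\p}+3)} = \sqrt{(1-\p)(1+3\p)}$, the constant terms should collapse to $2 - \p - \frac{1-\p}{2} + \frac{1}{2}\sqrt{(1-\p)(1+3\p)} = \frac{3-\p}{2} + \frac{1}{2}\sqrt{(1-\p)(1+3\p)}$, which is the claimed bound.

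\textbf{Main obstacle.} The routine algebra is the quadratic solve, but the step requiring the most care is confirming that $\lambda^* \in [0,1]$ so that the choice is admissible, and checking that the crossing point genuinely gives the minimax rather than an endpoint optimum (this follows from monotonicity of the two branches, but should be stated). For $\p = 1$ the formula gives $\lambda^* = 0$ and bound $1$, and for $\p = 1/2$ it should reduce to a value in $[0,1]$; a brief sanity check at these endpoints confirms feasibility across the whole range $\p \in [\frac{1}{2},1]$.
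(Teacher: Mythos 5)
Your proposal is correct and follows essentially the same route as the paper's proof: set $\rho=0$, equate the two branches of the maximum from Theorem \ref{thm:sr-avg}, solve the resulting quadratic (yours, $\p\lambda^2+(1-\p)\lambda-(1-\p)=0$, is the paper's after multiplying by $\p$), verify $\lambda^*\in[0,1]$, and evaluate the linear branch at $\lambda^*$. The only addition is your explicit monotonicity remark justifying that the crossing point minimizes the max, which is not needed for the corollary as stated but is a harmless strengthening.
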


\begin{figure}
    \centering
    \includegraphics[width=0.5\linewidth]{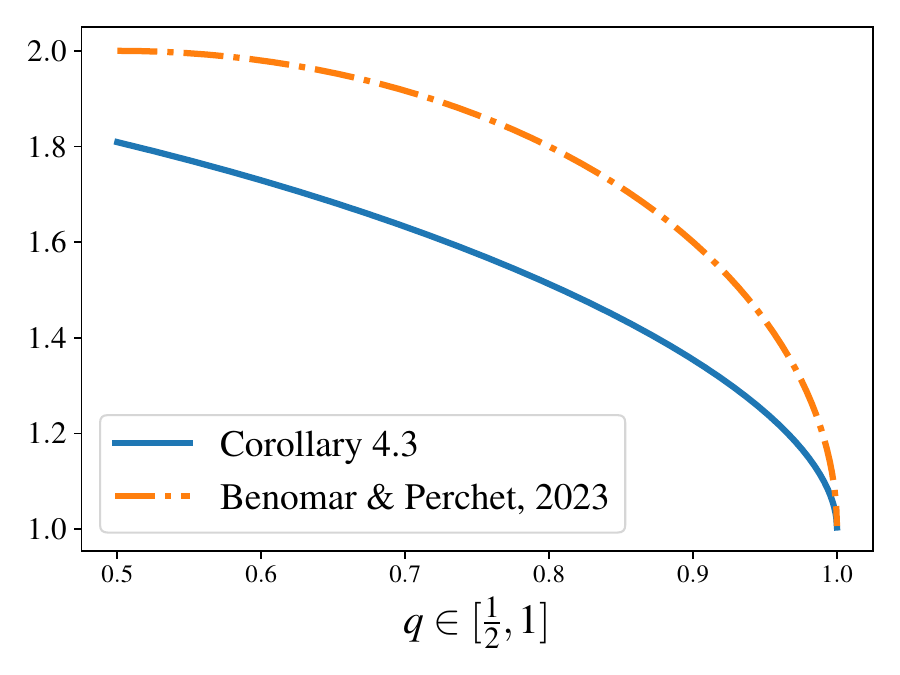}
    \caption{Upper bound on the competitive ratio of $\ASR$ with $\lambda, \rho$ as in Corollary \ref{cor:sr-avg}, and with $\lambda, \rho$ as in Lemma 2.2 of \cite{benomar2023advice}.}
    \label{fig:avg-sr}
\end{figure}

Under the same assumption on the prediction $y$, \citet{benomar2023advice} proved an upper bound of $(1+2\sqrt{\p(1-\p)}) \min(x,b)$ on the average cost of $\A^{\operatorname{SR}}_{\lambda,1}$ for a well-chosen value of $\lambda$. The bound of Corollary \ref{cor:sr-avg} is better than the latter as shown in Figure \ref{fig:avg-sr}.
Moreover, note that for $\p=1/2$, the bound of Corollary \ref{cor:sr-avg} is better than $2$, which is the best competitive ratio achievable by a deterministic algorithm for the ski-rental problem. This is because, in that case, $\ASR$ uses the Bernoulli random variable $\indic{y \geq b} \sim \mathcal{B}(1/2)$.

\begin{proof}
For all $\p \in [\frac{1}{2},1]$ and $\rho \in [0,1]$, the upper bound of Theorem \ref{thm:sr-avg} is non-decreasing with respect to $\rho$, hence the optimal choice of $\rho$ is 0. With $\rho = 0$, let us examine for which value of $\lambda$ the two terms in the maximum of the upper bound in Theorem \ref{thm:sr-avg} are equal.  We have for all $\lambda \in [0,1]$ the equivalences
\begin{align*}
2 - (1-\lambda)\p = 1 + \frac{1-\p}{\lambda}
&\iff 1-(1-\lambda) \p = \frac{1-\p}{\lambda}\\
&\iff \tfrac{\lambda}{\p} - \lambda(1-\lambda) = \tfrac{1}{\p}-1\\
&\iff \lambda^2 + \left( \tfrac{1}{\p} - 1 \right) \lambda - \left(\tfrac{1}{\p} - 1\right) = 0\\
&\iff \lambda = \tfrac{1}{2} \sqrt{\left(\tfrac{1}{\p} - 1\right) \left( 3 + \tfrac{1}{\p}\right)} - \tfrac{1}{2}\left(\tfrac{1}{\p} - 1\right)\;.
\end{align*}
Let us denote by $\lambda^*$ the expression of $\lambda$ above. It holds that $\lambda^* \in [0,1]$. Indeed,
\begin{align*}
\lambda^* 
&= \tfrac{1}{2} \sqrt{\left(\tfrac{1}{\p} - 1\right) \left(\tfrac{1}{\p} + 3\right)} - \tfrac{1}{2}\left(\tfrac{1}{\p} - 1\right)
\geq \tfrac{1}{2} \sqrt{\left(\tfrac{1}{\p} - 1\right)^2} - \tfrac{1}{2}\left(\tfrac{1}{\p} - 1\right) = 0\;,\\
\lambda^*
&= \tfrac{1}{2} \sqrt{\tfrac{1}{\p^2} + \tfrac{2}{\p} - 3} - \tfrac{1}{2}\left(\tfrac{1}{\p} - 1\right)
\leq \tfrac{1}{2} \left(\tfrac{1}{\p} + 1\right) - \tfrac{1}{2}\left(\tfrac{1}{\p} - 1\right) = 1\;.
\end{align*}
Therefore, $\lambda^*$ is a valid value of $\lambda$ that can be chosen by the decision-maker, which yields the following upper bound on the average cost of $\ASR$
\begin{align*}
\frac{\E_y[\A^{\operatorname{SR}}_{\lambda^*,0}(x,y)]}{\min(x,b)} 
&\leq \max\left(2 - (1 - \lambda^*) \p, 1 + \frac{1-\p}{\lambda}\right)\\
&= 2 - (1 - \lambda^*) \p\\
&= 2 - \frac{1+\p}{2} + \frac{\p^2}{2} \sqrt{\left(\tfrac{1}{\p} - 1\right) \left( 3 + \tfrac{1}{\p}\right)}\\
&= \frac{3-\p}{2} + \frac{1}{2} \sqrt{(1-\p)(1+3\p)}\;.
\end{align*}

\end{proof}

\paragraph{Smoothness and Average-Cost Tradeoff.}
A natural tradeoff arises between the consistency, robustness, and average cost of the algorithm. Minimizing the average cost necessitates selecting $\lambda$ optimally to minimize the upper bound established in Theorem \ref{thm:sr-avg}. However, the decision-maker may opt to deviate from this value to achieve a better level of consistency or robustness, depending on the specific use case.

Furthermore, Theorems \ref{thm:sr-smoothness} and \ref{thm:sr-avg} imply that, for a fixed $\lambda \in [0, 1]$, the levels of consistency and robustness of $\ASR$ are constant, while the smoothness and average cost can be further adjusted using $\rho$.
While increasing $\rho$ enhances the smoothness, it degrades the average cost, regardless of the accuracy $\p$ of the predictions. This indicates that, in addition to the tradeoff between consistency and robustness governed by $\lambda$, the algorithm also exhibits a tradeoff between average cost and smoothness, governed by the parameter $\rho$.

\section{EXPERIMENTS}

In this section, we present experimental results to validate our theoretical findings and provide additional insights into the tradeoffs discussed in the paper.

\begin{figure}
    \centering
    \begin{minipage}{0.52\textwidth}
        \centering
        \includegraphics[width=\linewidth]{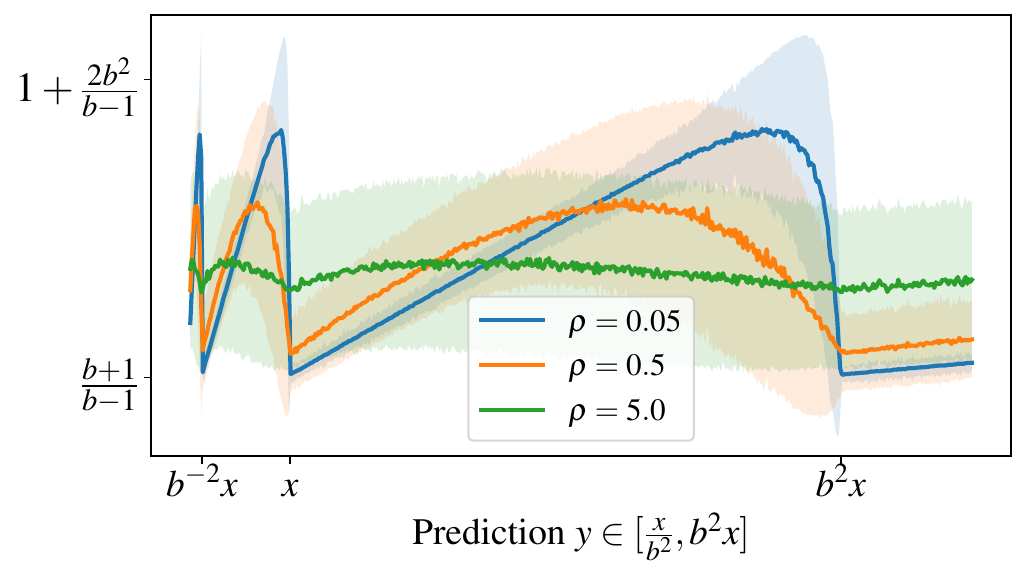}
        \caption{Consistency-smoothness tradeoff of $\AOS$ with a prediction randomized as in Theorem \ref{thm:smooth-LS}}
        \label{fig:line-search-experiment}
    \end{minipage}
    \hfill
    \begin{minipage}{0.44\textwidth}
        \centering
        \includegraphics[width=\linewidth]{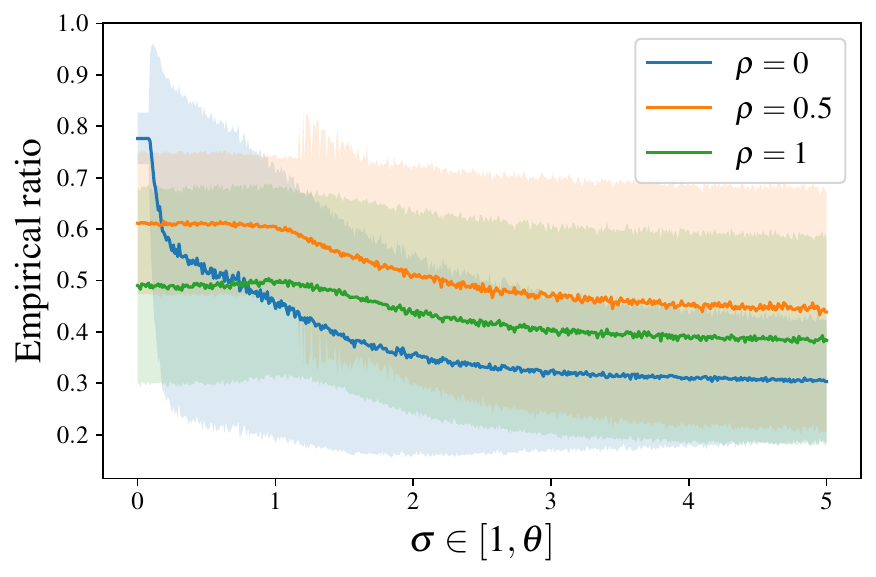}
        \caption{Consistency-smoothness tradeoff of Algorithm $\RAOM$ for $\rho \in \{0,0.5,1\}$.}
        \label{fig:RAOM}
    \end{minipage}
\end{figure}

\paragraph{Line Search.}
As established in Lemma \ref{lem:wc-po-os} and illustrated in Figure \ref{fig:lb-onlineSearch}, given a target position $x \geq 1$ and prediction $y > 0$, the ratio between the distance traveled by Algorithm $\AOS$ and the optimal offline algorithm depends solely on the ratio $x/y$ when $x$ is large. To investigate the impact of the parameter $\rho$, we fix $x = 100$ and $b = 2.5$, then compare the behavior of the algorithm presented in Section \ref{sec:line-search} for three different values of $\rho \in \{0.05, 0.5, 5\}$, with $y \in [\tfrac{x}{b^2}, b^2 x]$. For each point in the experiment, the average and standard deviation are computed over $10^5$ independent trials.
Figure \ref{fig:line-search-experiment} demonstrates that smaller values of $\rho$ lead to better consistency, but make the algorithm highly sensitive to prediction errors. This highlights the consistency-smoothness tradeoff established in Theorem \ref{thm:smooth-LS}.

\paragraph{One-Max Search.} 
We conduct an analogous experiment for the one-max search problem to demonstrate the consistency-smoothness tradeoff for $\RAOM$. Given a sequence with a maximal price $\pmax$, the algorithm is provided with a noisy prediction in the form $y = \pmax + \varepsilon$, where $\varepsilon \sim \mathcal{U}[-\sigma, \sigma]$. The threshold set by $\RAOM$, denoted as $\rthresh$, determines its worst-case payoff: if $\pmax \geq \rthresh$, the algorithm gains $\rthresh$; otherwise, the gain is $1$, which is the minimum possible price in the sequence. This scenario is asymptotically achieved by the sequence $p = (p_1, \ldots, p_{n+1})$, where $p_i = 1 + \tfrac{i-1}{n-1}(\pmax - 1)$ for $i \leq n$, and $p_{n+1} = 1$. In the experiment, $\lambda = 0.1$ and $\theta = 5$ are fixed, and for each $\sigma \in [0, \theta]$, the worst-case average ratio $\sup_{\pmax \in [1, \theta]} \E[\RAOM(p, \pmax + \varepsilon)]/\pmax$ and the corresponding standard deviation are evaluated over $10^5$ independent samples.
Figure \ref{fig:RAOM} shows that the algorithm suffers from brittleness for $\rho = 0$, as the slightest prediction error substantially degrades its performance. In contrast, as $\rho$ increases and randomization is introduced, the algorithm becomes smoother; at the cost of consistency.

\paragraph{Ski Rental.}

\begin{figure}
    \centering
    \begin{minipage}{0.48\textwidth}
        \centering
        \includegraphics[width=\linewidth]{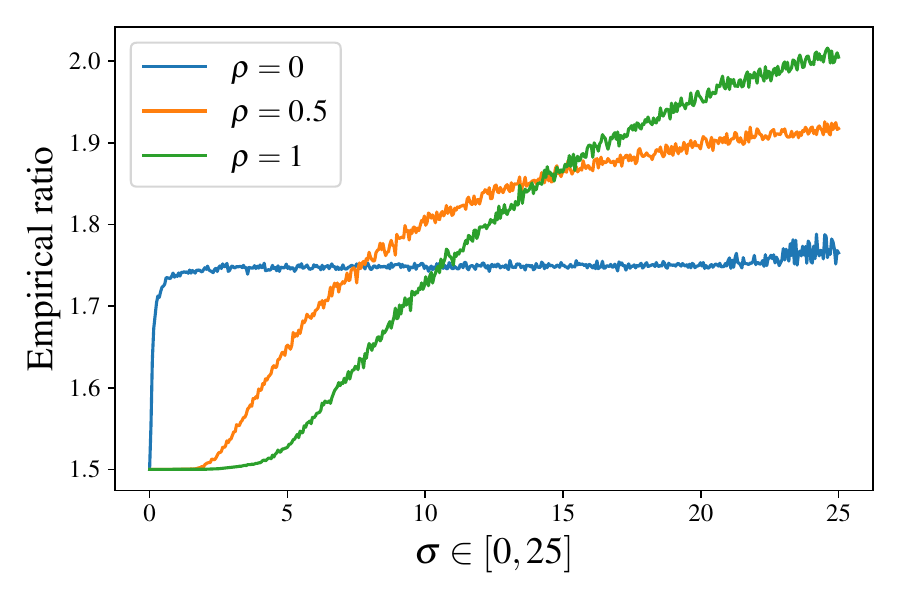}
        \caption{Consistency-smoothness tradeoff of $\ASR$, with $y \sim x + \mathcal{N}(0,\sigma^2)$}
        \label{fig:ASR-gaussian}
    \end{minipage}
    \hfill
    \begin{minipage}{0.48\textwidth}
        \centering
        \includegraphics[width=\linewidth]{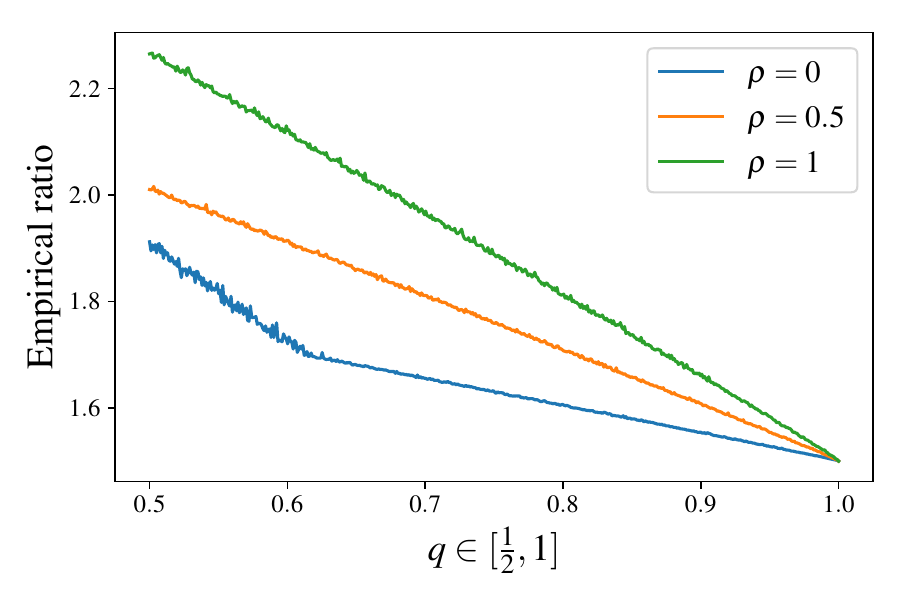}
        \caption{Average-case performance of $\ASR$ when $\indic{y \geq b} = \indic{x \geq b}$ with probability $\p$}
        \label{fig:ASR-bernoulli}
    \end{minipage}
\end{figure}

For the ski-rental problem, two experiments are conducted to investigate, on one hand, the impact of the parameter $\rho$ on the consistency and smoothness of the algorithm $\ASR$, and on the other hand, its impact on the average performance of $\ASR$, under the assumption that the prediction lies on the same side of $b$ as the true value of the number of snow days $x$ with probability $\p$.
In both experiments, we set $b = 10$ and $\lambda = 0.5$.

In Figure \ref{fig:ASR-gaussian}, the performance of $\ASR$ is evaluated against the prediction error $\eta = |y - x|$. Predictions are of the form $y = x + \varepsilon$, where $\varepsilon \sim \mathcal{N}(0, \sigma^2)$. For each value of $\sigma$, the figure shows the worst-case ratio $\sup_{x \in (0,5b]} {\ASR(x, y)}/{\min(x, b)}$, which compares the cost induced by $\ASR$ to that of the optimal offline algorithm at a fixed level of error.

When $\rho = 0$, the algorithm lacks smoothness, resulting in a significant increase in cost from the consistency value of $1 + \lambda = 1.5$ when the error is zero, to $\approx 1.7$ for even a minimal positive error. For larger values of $\rho$, the algorithm exhibits improved smoothness while maintaining the same consistency level, as proved in Theorem \ref{thm:sr-smoothness}. However, it is noteworthy that smaller values of $\rho$ tend to yield a better average ratio when the prediction error is large.

On the other hand, Figure \ref{fig:ASR-bernoulli} examines the the average cost of $\ASR$, assuming that $\indic{y \geq b} = \indic{x \geq b}$ with probability $\p \in \left[\frac{1}{2}, 1\right]$. Each data point in the figure is computed over $10^5$ independent trials, where $x \sim \mathcal{U}[1, 4b]$, and $y$ is selected arbitrarily on the same side of $b$ as $x$ with probability $\p$, while on the opposite side with probability $1-\p$. The results further corroborate our theoretical findings, indicating that smaller values of $\rho$ yield the best average performance.

Combined with the observations from Figure \ref{fig:ASR-gaussian}, these experiments delineate the tradeoff between the smoothness and average-case performance of $\ASR$, which can be tuned using the parameter $\rho$.

\bibliographystyle{plainnat}
\bibliography{bibliography}


\end{document}